\def\ps@IEEEtitlepagestyle{
  \def\@oddfoot{\mycopyrightnotice}
  \def\@evenfoot{}
}
\def\mycopyrightnotice{
  {\footnotesize
  \begin{minipage}{\textwidth}
  \centering
  Copyright~\copyright~2023 IEEE. Personal use of this material is permitted. However, permission to use this  \\ 
  material for any other purposes must be obtained from the IEEE by sending a request to pubs-permissions@ieee.org.
  \end{minipage}
  }
}
\newtheorem{theorem}{Theorem}
\newtheorem{remark}{Remark}
\def\BibTeX{{\rm B\kern-.05em{\sc i\kern-.025em b}\kern-.08em T\kern-.1667em\lower.7ex\hbox{E}\kern-.125emX}}
\DeclareMathOperator*{\argmin}{arg\,min}
\begin{document}

\title{Knowledge-Aware Semantic Communication System Design and Data Allocation}

\author{Sachin~Kadam$^{\orcidlink{0000-0001-7085-3365}}$~and~Dong~In~Kim$^{\orcidlink{0000-0001-7711-8072}}$,~\IEEEmembership{Fellow,~IEEE}
\thanks{A preliminary version of this paper is published in IEEE International Conference on Communications (ICC) 2023~\cite{kadam2023knowledge}.}
\thanks{S.~Kadam and D.~I.~Kim are with the Department of Electrical and Computer Engineering, Sungkyunkwan University (SKKU), Suwon 16419, Republic of Korea (e-mail: sachinkadam@skku.edu, dikim@skku.ac.kr).}}
\maketitle

\begin{abstract}
The recent emergence of 6G raises the challenge of increasing the transmission data rate even further in order to overcome the Shannon limit. Traditional communication methods fall short of the 6G goals, paving the way for Semantic Communication (SemCom) systems that have applications in the metaverse, healthcare, economics, etc. In SemCom systems, only the relevant keywords from the data are extracted and used for transmission. In this paper, we design an auto-encoder and auto-decoder that only transmit these keywords and, respectively, recover the data using the received keywords and the shared knowledge. This SemCom system is used in a setup in which the receiver allocates various categories of the same dataset collected from the transmitter, which differ in size and accuracy, to a number of users. This scenario is formulated using an optimization problem called the data allocation problem (DAP). We show that it is NP-complete and propose a greedy algorithm to solve it.  Using simulations, we show that the proposed methods for SemCom system design outperform state-of-the-art methods in terms of average number of words per sentence for a given accuracy, and that the proposed greedy algorithm solution of the DAP performs significantly close to the optimal solution.
\end{abstract}

\begin{IEEEkeywords}
Semantic Communications, Knowledge Base, 6G, Data Allocation, Wireless Communications 
\end{IEEEkeywords}

\section{Introduction}\label{Sec:Intro}
As per the prediction in~\cite{rajatheva2020white}, semantic communication (SemCom) technology is identified as one of the key ingredients in 6G due to the requirement of low latency and high data rate transmissions. The recent emergence of SemCom technologies finds applications in wide range of fields such as economics~\cite{liew2022economics}, metaverse~\cite{ismail2022semantic}, autonomous transportation systems~\cite{yang2022semanticedge}, healthcare~\cite{ghodratnama2021summary2vec}, smart factories~\cite{luo2022semantic}, and so on.
In SemCom, we only transmit useful and necessary information to the recipients. The semantic extraction (SE) is a process wherein the useful and necessary features are extracted from the original raw data. For example, the essential speech features are extracted using an attention-based mechanism in~\cite{weng2021semantic,weng2021semantic2,tong2021federated}, image features are extracted using ResNet-50~\cite{he2016deep} in~\cite{zhang2022multi}, etc.  

{\color{black}In order to overcome the Shannon limit in 6G communication systems, the transmission data rate must be increased even further~\cite{yang2022semantic,qin2021semantic,strinati20216g}. The Shannon channel capacity may be exceeded for a communication system that transmits semantically correct data but permits a non-zero bit error rate (BER). A formal proof for the same is provided in~\cite{ma2023theory}. In a traditional communication system, the source information is transformed into bit sequences for processing. The bit sequence corresponding to the source information is precisely decoded at the receiver. In a traditional communication system, the bit/symbol transmission rate is limited by Shannon capacity. Semantic communication systems convey the semantic meaning of the source information. One significant distinction is the addition of semantic coding, which captures semantic information based on tasks or actions to be performed by the receiver. Only those semantic characteristics will be communicated, considerably reducing the number of essential communication resources. At the receiver, operations such as data reconstruction or more sophisticated tasks like image recognition and language translation might be performed. Using a semantic encoder with minimal semantic ambiguity and a semantic decoder with strong inference capabilities and a large shared knowledge base, it is proved in~\cite{bao2011towards} that there is a possibility to obtain higher transmission rates in semantic communication systems than those shown by Shannon's channel capacity theorem for traditional communication systems.}

During critical applications such as military operations, search operations by forest personnel in a dense forest, medical emergencies in remote areas, fire incidents in a remote agricultural land, the release of water from a nearby dam, etc., only the essential information needs to be communicated on an urgent basis. The messages could be in the form of text or audio and they come from a limited dataset. 
In a non-critical application, such as broadcasting a text/audio summary of commentary provided by live football commentators. Among all the words spoken by them, only a limited set of useful or important words are relevant to the game. These words are drawn from a limited dataset such as football vocabulary~\cite{footballvocab} which includes words such as \textit{goal, player names, red card, football, score, assist, half-time}, etc. This limited dataset provides an opportunity, in the context of SemCom design, for a significant overhead reduction by extracting and processing only the relevant keywords. For example, an uttered commentary sentence in 2022 FIFA world cup final game is:  `Messi shoots the ball into the right-bottom of the net and it's a goal!'  The extracted keywords in this example are \textit{Messi, shoots, ball, right-bottom, net, goal}. Only these keywords are transmitted in place of the entire sentence, and the receiver reconstructs a meaningful sentence. The reconstructed sentence in this case is: `Messi shoots the ball into the right-bottom of the net to score a goal.' This sentence is not exactly the same as the original sentence, but it conveys the same meaning.

The first goal of this paper is to use SemCom technology to reduce communication overhead, in the context of natural language processing (NLP) problems, while maintaining a certain minimum accuracy in wireless communication systems. The overhead reduction is performed with high accuracy in the literature~\cite{xie2021deep,xie2020lite}. However, in some applications, high data rates are preferred over high accuracy. {\color{black}For example, consider the football commentary transmission described in the preceding paragraph. Instead of sending the entire sentence, only the essential keywords are sent, and the entire sentence is reconstructed using the received keywords. The reconstructed sentence may not be completely accurate, but it conveys the same information as the original sentence with some accuracy. We were able to reduce overhead significantly while maintaining some accuracy. This reduction in overhead results in transmission of data with a faster rate, which is a key requirement for 6G.} As a result, we present the results of the trade-off between overhead reduction and accuracy. Model parameters are chosen based on the context.
Instead of transmitting raw data, the transmitter is designed to transmit semantic data, which significantly reduces network data traffic. A knowledge graph (KG) is a knowledge base that integrates data using a graph-structured topology. They are used to store interconnected event descriptions. These are used to predict the missing words in the received data (keywords) to construct a meaningful sentence.

{\color{black}Next, we apply the designed SemCom system to a realistic problem in which the transmitter and receiver are assumed to be a cloud server and a data center, respectively. A cloud server located on a remote cloud platform has access to a large raw dataset that is stored in the cloud. A data center is a centralized information storage facility that can store, process, and distribute massive amounts of data to its users~\cite{datacenter}. A data center requests a portion of the raw dataset in various categories. These dataset categories are based on their size and accuracy levels. For example, in the case of live streaming of sports events, the cloud server provides different quality videos of the same content to the data center, such as $480p, ~720p, ~1080p, ~2160p,$ and $4K$, and the users are served based on the subscribed video quality service and fee. The proposed SemCom technology-based communication system is used to transmit these datasets to the data center. Since the data center also has access to the shared KG, these datasets can be decoded with a certain accuracy. The data center replicates these different category datasets to store them in its storage facility to serve its subscribed users. Because storage capacity is limited, the data center's challenge is to find an optimal set of dataset replications to serve its subscribers.} The cloud server determines the price of each category dataset based on its size and quality (in terms of accuracy). 

In our case, the data center is assumed to store the different categories of the same portion of the dataset in its storage facility. The users can access these datasets directly from the storage facility by paying a certain price. The different-quality datasets are priced differently based on their quality and sizes. For example, a highly compressed dataset is small in size but poor in accuracy, so it is less expensive, whereas a lightly compressed dataset is large in size but superior in accuracy, so it is more expensive. Every user has a budget constraint, just as the data center has a storage constraint.\footnote{{\color{black} The assumption of data center's storage constraint is due to the following reasons: Data center storage, in general, refers to the devices, equipment, and software solutions that enable data storage within a data center facility. This includes data center storage policies and procedures that govern the entire data storage and retrieval process. Furthermore, data center storage may include data center storage security and access control techniques and methodologies. Data center components require substantial infrastructure to support such large hardware and software requirements. These include power subsystems, uninterruptible power supplies, ventilation, cooling systems, fire suppression, backup generators, and connections to external networks. To provide all of this necessary support, data centers typically have a limit on total data storage. Similarly, the constraints on the budget of its associated users are because each user has a limit on how much they can earn and, as a result, a limited budget that they can allocate to various items. Finally, limited income is the root cause of budget constraints. Budget constraints are visible in the fact that users cannot simply buy everything they want and are forced to choose between alternatives based on their preferences.}} The data center replicates these datasets based on the needs and budgets of the users. It strives to provide the highest-quality dataset possible to every user with a sufficient budget. It is a highly desirable scenario for the following two reasons: (a) the data center can maximize profits, and (b) the user is extremely satisfied with the service and can provide a high rating as feedback. Hence, we formulate an optimization problem in which the data center attempts to maximize its profit given the constraint on its storage capacity in order to serve all the subscribed users. We denote this problem as the \textit{Data Allocation Problem} (DAP).

{\color{black}However, challenges arise in data allocation due to resource constraints, such as data storage capacity at the data center and the need to serve all subscribed users. This difficulty is most noticeable when there are a large number of subscribers. In order to serve all subscribed users, the data center is forced to allocate lower-quality datasets to some of its subscribers, despite the fact that user feedback may damage its reputation and cost it revenue.}

{\color{black} Now, we provide the main contributions of this paper, which are as follows:
\begin{itemize}
    \item In this research work, we designed an auto-encoder-equipped transmitter and an auto-decoder-equipped receiver that only transmit the relevant keywords and, respectively, retrieve the data based on the received keywords and shared knowledge.
    \item We defined a new metric called Semantic Score (SS) that combines the best of two separate quantities, the BLEU score~\cite{papineni2002bleu}, and sentence similarity that employs BERT~\cite{devlin2019bert}, to quantify the overall semantic loss between the original and reconstructed sentences at the receiver.
    \item The performances of our proposed scheme are analytically compared to those of the two state-of-the-art schemes in terms of accuracy versus overhead reduction trade-off. 
    \item The SemCom system is then implemented in a realistic scenario in which a cloud server and a data center serve as transmitter and receiver, respectively. We formulated the DAP in which the data center optimally distributes different types of datasets received from the cloud server to its subscribers. We proved that the DAP is an NP-complete problem and proposed a greedy algorithm for solving it.
    \item We demonstrated using simulations on the real-world dataset that the proposed methods for SemCom system design outperform state-of-the-art methods in terms of average number of words per sentence for a given accuracy and that the proposed greedy algorithm solution of the DAP performs significantly close to the optimal solution.
\end{itemize}
}

The organization of the paper is as follows: A brief literature review of SemCom technologies and KGs is provided in Section~\ref{Sec:RelatedWork}. We introduce our SemCom system model and problem formulation in Section~\ref{Sec:SysModel_ProbForm}. The proposed SemCom system model is presented in detail in Section~\ref{Sec:Prop_SemCom_Model}. The performance analysis of the proposed scheme in terms of accuracy versus overhead reduction trade-off and cost comparisons is provided in Section~\ref{Sec:Performance_Analysis}. Then we define our DAP, show that it is an NP-complete problem, and propose a greedy algorithm to solve it in Section~\ref{Sec:DataAlloc}. We provide simulation results related to the proposed SemCom system model and the solutions of the DAP in Section~\ref{Sec:Simulations}. Finally, we conclude the paper along with a few future research directions in Section~\ref{Sec:Conclusions}.

\section{Related Work} \label{Sec:RelatedWork}
The study on SemCom technologies started very recently. The following state-of-the-art survey papers provide in-depth discussions on various SemCom technologies and their applications~\cite{strinati20216g,qin2021semantic,lan2021semantic,gunduz2022beyond,yang2022semantic,chaccour2022less}. Deep learning based SemCom technologies are proposed in ~\cite{xie2021deep,xie2020lite}. A brief tutorial on the
framework of SemCom and a method to calculate a bound on semantic data compression is provided in~\cite{niu2022towards}. The SemCom technology wherein both transmitter and receiver are empowered with the capability of contextual reasoning is proposed in~\cite{seo2021semantics}. The SemCom technology for a system where transmitter and receiver speak different languages is designed in~\cite{sana2022learning}. A {\color{black}multi-user} task-oriented SemCom system for multi-modal data transmission is proposed in~\cite{xie2021task}. A nonlinear transform based source-channel coding approach for SemCom is proposed in~\cite{9791398}, wherein a nonlinear transform mechanism is used to extract the source semantic features. 
A joint source-channel coding scheme is proposed in~\cite{farsad2018deep} that preserves the meaning between the transmitted sentence $s$ and the recovered sentence $\hat{s}$, while the two sentences may have different words and different lengths.
The work in~\cite{lu2022rethinking} introduced a reinforcement learning (RL)-powered SemCom paradigm that gives a system the ability to express semantics. 
In~\cite{wang2022performance}, a SemCom framework for textual data transmission is proposed. In this framework, semantic information is represented by a KG made up of a set of semantic triples and the receiver recovers the original text using a graph-to-text generation model. 
Another SemCom system based on the KG is proposed in~\cite{jiang2022reliable}. In this system, transmitted sentences are converted into triplets using the KG, which are seen as fundamental semantic symbols for semantic extraction and restoration, and they are ordered based on semantic relevance. 
All of these works are focused on achieving an overhead reduction without compromising the accuracy of the received data. None of these works investigated the possibility of further overhead reduction, thereby improving transmission data rate, while sacrificing a little accuracy. This issue is addressed in this paper using a shared knowledge base.  

A significant research on the usage of KGs is carried out in the field of natural language processing (NLP). The survey work in~\cite{hogan2021knowledge} provides a comprehensive study of KGs, which leverage large-scale data collections for usage in a variety of industry and academic applications. A survey paper based on KG is presented in~\cite{ji2021survey}. Similarly, another survey paper on KG text generation is presented in~\cite{yu2022survey}. A method to generate a summary of sentences by using a given set of keywords is proposed in~\cite{li2020keywords}. Similarly, a method to generate a summary of sentences by using a  knowledge base is shown in~\cite{huang2020knowledge}. Recently, KGs are {\color{black}utilized} in the context of SemCom design~\cite{wang2022performance,jiang2022reliable,zhou2022cognitive,liang2022life}. But these works do not focus on the issue presented in this paper, which is to design a SemCom system with a significant overhead reduction with a little compromise on accuracy. 

{\color{black} Problems similar to the DAP are well studied in the literature. A few similar problems studied are file allocation problems (FAP) in distributed systems~\cite{chandy1976file,chu1969optimal}, data allocation in database systems (DADS)~\cite{apers1988data,kwok1996design,karimi2009new}, data allocation over multiple channels at broadcast servers (DABS)~\cite{yee2002efficient}, etc. The goal of FAP is to find the optimal way to allocate files in order to minimize the operating costs associated with the files while keeping the following constraints in mind: (a) the expected time to access each file is less than a given bound; and (b) the amount of storage required at each computer does not exceed the available storage capacity. The problem in DADS is similar to FAP but differs in the following aspects: First, the data objects to be allocated are unknown in advance; second, these data objects are accessible by schedules that comprise transmissions between these data objects to generate the result. The goal of the DABS is to optimally allocate data objects across the available channels in a broadcast system so that the overall cost (in terms of average expected delay) is minimized while ensuring that all nodes receive at least one copy of these data objects. 
However, the uniqueness of the DAP lies in its problem structure. The data objects are allocated in the DAP with the goal of maximizing the data center's profit while ensuring that the cost of each allocated data object does not exceed the user budget and that each user is assigned exactly one category of data object.
}

\section{System Model and Problem Formulation}\label{Sec:SysModel_ProbForm}
\begin{figure*}
\centering
\includegraphics[width=1.0\textwidth]{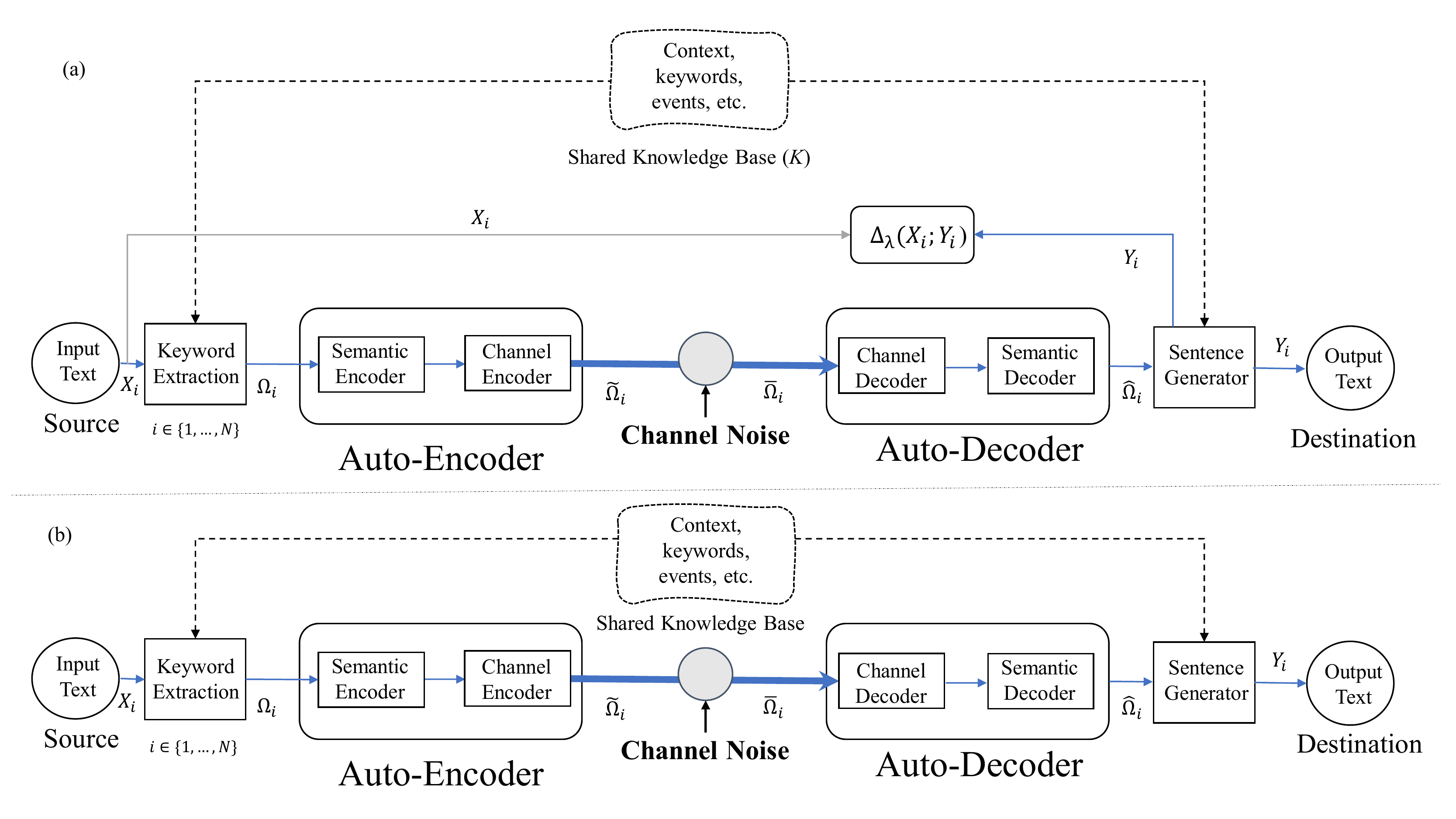}
    \caption{The block diagram of our proposed SemCom system model. The model in Fig. 1(a) is used for training the system parameters and the model in Fig. 1(b) is used for evaluating the SemCom system.}
\label{fig:SemComDesignModel}
\end{figure*} 
In this Section, first, we provide a brief overview of the proposed system model and later in Section~\ref{Sec:Prob_Formulation}, we present our problem formulation. {\color{black}In Table~\ref{Tab:MathSymbols}, we list the symbols and abbreviations used in this paper along with their meanings.}
\begin{table}[!ht]
    \centering
    \caption{List of mathematical symbols and abbreviations}
    \begin{tabular}{|c|l|}
        \hline
        \textit{Symbol} & \textit{Meaning} \\ 
        \hline
        SemCom & Semantic Communication \\ \hline
        DAP & Data Allocation Problem \\ \hline
        KG & Knowledge Graph \\ \hline
         $\mathcal{X}$ & Input text dataset \\ \hline
         $N$ & Number of sentences in $\mathcal{X}$
         \\ \hline
         $X_i$ & $i^{th}, i\in \{1, \ldots, N\}$, sentence of $\mathcal{X}$
         \\ \hline
         $\mathcal{K}$ & Shared knowledge base
         \\ \hline
         $\Omega_i$ & The set of keywords present in $X_i$
         \\ \hline
         $\Omega$& Total set of keywords \\ \hline 
         $\mathscr{S}_{\theta_e}$ & Semantic encoder  with $\theta_e$ as the parameter set
         \\ \hline
         $\mathscr{C}_{\phi_e}$ & Channel encoder with $\phi_e$ as the parameter set \\ \hline
         $\widetilde{\Omega}$ & The encoded set of symbols
         \\ \hline
         $h$ & Channel gain \\ \hline 
         AWGN & Additive White Gaussian Noise \\ \hline
        $\eta$ & AWGN noise
         \\ \hline
         $\overline{\Omega}$ & The set of received symbols at the receiver\\ \hline
         $\mathscr{C}_{\phi_d}$ & Channel decoder with $\phi_d$ as the parameter set \\ \hline $\mathscr{S}_{\theta_d}$ & Semantic decoder with $\theta_d$ as the parameter set \\ \hline
         $\widehat{\Omega}$ & Decoded set of keywords \\ \hline
         $Y$ & Set of
         sentences generated at the receiver \\ \hline 
         $\mathcal{Y}$ & Reconstructed text dataset \\ \hline
         SS & Semantic Score \\ \hline
         $\lambda$ & a hyper-parameter between 0 and 1 \\ \hline
          $\tau$ & user defined accuracy parameter \\ \hline
         BLEU & Bilingual Evaluation Understudy \\ \hline
         BERT & Bidirectional Encoder Representations from Transformers \\ \hline
          $\Delta_\lambda(s,\hat{s})$ & Semantic Score between sentences $s$ and $\hat{s}$
          \\ \hline BLEU$(s,\hat{s})$ & BLEU score between sentences $s$ and $\hat{s}$ \\ \hline
          $\Phi(s,\hat{s})$ & sentence similarity score between sentences $s$ and $\hat{s}$ \\ \hline
          $B$ & Batch size \\ \hline 
          $L$ & Total length of sentences \\ \hline
          $\textbf{M}$ & Semantic information provided by $\bf{\Omega}$ \\ \hline
          $V$ & Dimension of the encoder's output \\ \hline
          $\widehat{\textbf{M}}$ &
          Recovered semantic information \\ \hline
          $G$ & Number of data categories in data center \\ \hline
          $z_i$ & Size of $i^{th}$ category data in data center\\ \hline
          $c_i$ & Selling cost of of $i^{th}$ category data from data center\\ \hline
          $J$ & Number of subscribed users \\ \hline
          $Z$ & Total size constraint of data center \\ \hline
          $b_j$ & Budget constraint of $j^{th}$ user \\ \hline
          $d(z_i)$ & Purchase price of $i^{th}$ category data \\ \hline
          KP & Knapsack Problem \\ \hline
          $\overline{W}$ & Average number of words per sentence \\ \hline
          DeepSC & Deep learning enabled Semantic Communication \\ \hline
          JSCC & Joint Source-Channel Coding \\ \hline
          \end{tabular}    
    \label{Tab:MathSymbols}
\end{table}
\subsection{System Model} \label{Sec:SysModel}
The system model of the proposed SemCom system is shown in Fig.~\ref{fig:SemComDesignModel}. Let $\mathcal{X}$ be the input text dataset with $N$ sentences, $X_i$ be the $i^{th}, i\in \{1, \ldots, N\}$, sentence of $\mathcal{X}$, and $\mathcal{K}$ be the shared knowledge base. First, we extract the keywords from $\mathcal{X}$ using $\mathcal{K}$. Let the total set of keywords be $\Omega = \bigcup_{i=1}^N \Omega_i$, where $\Omega_i$ denotes the set of keywords present in $X_i$. The keyword extraction process is executed by multiplying the input sentence $X_i = [\omega_{i\ell}, \ell = 1, \ldots, L_i],$ with a binary vector $b_i = [b_{i\ell}, \ell = 1, \ldots, L_i]$, where $L_i = |X_i|$,\footnote{$|\mathcal{A}|$ denotes the cardinality of set $\mathcal{A}$.} which is defined as follows:
\begin{align}
b_{i\ell} &={
\begin{cases} {1,}& {\text{if  $\omega_{i\ell}$ is a keyword in $\mathcal{K}$}} \\ 0,& \text{else}.
\end{cases}}
\end{align}
Hence, $\Omega_i$, $i\in \{1, \ldots, N\}$, is obtained by collecting the non-zero elements from $X_i \odot b_i$, where $\odot$ is a word-wise multiplication operator. Here $X_i \odot b_i \triangleq [\omega_{i\ell} b_{i\ell}, \forall \ell = \{1, \ldots, L_i\}]$.\footnote{For ease of understanding, let us consider the example discussed in Section~\ref{Sec:Intro}. Let $X_i$ be [Messi shoots the ball into the right-bottom of the net and it's a goal!]. If the set of keywords present in $X_i$ is \{\textit{Messi, shoots, ball, right-bottom, net, goal}\} then $b_i = [1 1 0 1 0 0 1 0 0 1 0 0 0 1]$. Now, $X_i \odot b_i$ gives $[\textit{Messi, shoots, 0, ball, 0, 0, right-bottom, 0, 0, net, 0, 0, 0, goal}]$. Next, $\Omega_i$ is obtained by collecting the non-zero elements, i.e., $\Omega_i = \{\textit{Messi, shoots, ball, right-bottom, net, goal}\}$.} 
It is a simple search method, like checking whether the given English word is in the dictionary or not. For a faster and automatic keyword extraction process, refer to Rapid Automatic Keyword Extraction (RAKE) method~\cite{rose2010automatic}. It is an approach for extracting keywords from particular documents that is unsupervised, domain-independent, and language-independent.

Next, the $i^{th}$ keyword set $\Omega_i$ is encoded using the auto-encoder which consists of semantic and channel encoders. Let us denote $\mathscr{S}_{\theta_e}$ and $\mathscr{C}_{\phi_e}$ as the semantic and channel encoders with $\theta_e$ and $\phi_e$ as the parameter sets, respectively. After encoding $\Omega_i$, we get the following set of symbols:
\begin{equation}
    \widetilde{\Omega}_i = \mathscr{C}_{\phi_e} (\mathscr{S}_{\theta_e}(\Omega_i)),~i\in \{1, \ldots, N\}. 
\end{equation}

The encoded set of symbols $\widetilde{\Omega}_i$ is transmitted via the AWGN (additive white {\color{black}Gaussian} noise) channel. Let $h$ be the channel gain and $\eta$ be the noise which gets added to $\widetilde{\Omega}_i$ during transmission. So, the set of received symbols at the receiver is $\overline{\Omega}_i = h\widetilde{\Omega}_i + \eta$. After receiving, this set of symbols is decoded using the auto-decoder which consists of channel and semantic decoders. Let us denote $\mathscr{C}_{\phi_d}$ and $\mathscr{S}_{\theta_d}$ as the channel and semantic decoders with $\phi_d$ and $\theta_d$ as the parameter sets, respectively. After decoding $\overline{\Omega}_i$, we get the following set of keywords:
\begin{equation}
    \widehat{\Omega}_i = \mathscr{S}_{\theta_d} (\mathscr{C}_{\phi_d}(\overline{\Omega}_i)),~i\in \{1, \ldots, N\}.
\end{equation}

From the decoded set of keywords and the shared knowledge base $\mathcal{K}$, the sentence generator at the receiver generates the sentence $Y_i \in \mathcal{Y}, i\in \{1, \ldots, N\}$, where $\mathcal{Y}$ is the reconstructed text dataset.

\subsection{Problem Formulation}\label{Sec:Prob_Formulation}
Given the limited size of knowledge base, though the accuracy of the reconstructed sentences in $\mathcal{Y}$ may not be sufficiently high, the useful content in those sentences is summarized and conveyed to the receiver. This novel approach saves a significant amount of overhead. 

To measure the overall semantic loss between the original sentence $s \in \mathcal{X}$ and the reconstructed sentence $\hat{s} \in \mathcal{Y}$ at the receiver, we define a new metric named \textit{Semantic Score} (SS) which combines the best of two different quantities, viz., BLEU score (bilingual evaluation understudy~\cite{papineni2002bleu}) and sentence similarity which uses BERT~\cite{devlin2019bert}.\footnote{The detailed explanation on SS is provided in Section~\ref{Sec:SemanticScore}.}
Let $\Delta_\lambda(s,\hat{s})$ denote the SS between sentences $s$ and $\hat{s}$, which is a convex combination of $\text{BLEU}(s,\hat{s})$\footnote{When explicitly not mentioned then {\color{black}BLEU} 1-gram is used.} (see~\eqref{eq:BLEUScore}) and $\Phi(s,\hat{s})$ (see~\eqref{eq:SS_Score}), i.e.,
\begin{equation}
    \Delta_\lambda(s;\hat{s}) = (1-\lambda) \text{BLEU}(s,\hat{s}) + \lambda \Phi(s,\hat{s}),
    \label{eq:SemanticScore}
\end{equation}
where $\lambda \in [0,1]$ is a parameter. 
Note that pure $\text{BLEU}(s,\hat{s})$ and $\Phi(s,\hat{s})$ are obtained by setting $\lambda = 0$ and $\lambda=1$, respectively.

There exists a trade-off between overhead reduction and the accuracy that depends on the size of the knowledge base $\mathcal{K}$. For example, if the size of the set $\mathcal{K}$ is small, then on an average only a few keywords are extracted from the given input sentences in $\mathcal{X}$, encoded and transmitted, which implies higher amount of average overhead reduction. This creates a large amount of missing information on an average, due to which accuracy of the reconstructed sentences in $\mathcal{Y}$ is expected to be low. 
On the other side,  if the size of the set $\mathcal{K}$ is large, then on an average a significant number of keywords are extracted from the given sentences in $\mathcal{X}$, encoded and transmitted, which implies lower average overhead reduction. This creates a small amount of missing information on an average, due to which accuracy of the reconstructed sentences in  $\mathcal{Y}$ is expected to be high. This phenomenon is numerically shown in Section~\ref{SubSec:Simu_SemCom}.

So, in this paper we aim at minimizing the transmission of average number of words per sentence (equivalent to maximizing the average overhead reduction) by keeping a certain minimum accuracy information $\tau$ in the received sentence, i.e.,
\begin{subequations}
\begin{align}
    \min & ~ \frac{1}{N}\sum_{i=1}^N  |\Omega_i| \label{eq:min_words} \\
     &\Delta_\lambda(X_i;Y_i) \ge \tau,~i=\{1, \ldots, N\}, \label{eq:tau_thr} 
\end{align}
\end{subequations}
where $|\Omega_i|$ denotes the number of keywords in $\Omega_i$ that corresponds to sentence $X_i$, and $\tau$ is an user defined parameter. 


\subsubsection{Semantic Score (SS)}\label{Sec:SemanticScore}
Now, we describe the semantic score used in the problem formulation. An attempt was made earlier in ~\cite{malandrakis2012deeppurple} where features of BLEU score and sentence similarity score are integrated using a multiple linear regression model. It determines sentence lexical matches, lexical semantic similarity between non-matching words, and sentence lengths. Motivated by this work, we propose the idea of SS (see~\eqref{eq:SemanticScore}). The BLEU score cannot handle word synonyms, but it is a fast, low-cost algorithm that is language-independent and corresponds with human judgment. The sentence similarity score using BERT vectors is slow, has comparable ratings to the BLEU, but it also handles synonyms. A brief description of the BLEU score and sentence similarity score is provided in the next two paragraphs.

First, let us define the quantity BLEU score (bilingual evaluation understudy~\cite{papineni2002bleu}) to compare the similarities between two sentences quantitatively. 
The BLEU$(s,\hat{s}) \in [0,1]$ score between transmitted sentence $s$ and reconstructed sentence $\hat{s}$ is computed as follows:
\begin{equation}
    \text{BLEU}(s,\hat{s}) = \text{BP}(s,\hat{s}) \exp\left( \sum_{n=1}^{W} {w_n \ln{p_n (s,\hat{s})}}\right),
    \label{eq:BLEUScore}
\end{equation}
where $p_n$ denotes the modified $n$-gram precision function up to length $W$, $w_n$ denotes the weights, and brevity penalty (BP) is given by the following expression:
\begin{align}
\text{BP}(s,\hat{s}) &={
\begin{cases} {1}& {\ell_c > \ell_r} \\ e^{1-\ell_r/\ell_c}& \ell_c \le \ell_r,
\end{cases}}
\end{align}
where $\ell_c$ is the length of the candidate translation and $\ell_r$ is the effective reference corpus length~\cite{papineni2002bleu}.

Next, sentence similarity score $\Phi(s,\hat{s})$ is defined as follows:
\begin{equation}
    \Phi(s,\hat{s}) \triangleq \frac{\boldsymbol{\beta(s)\cdot {\beta(\hat{s})}}^T}{||\boldsymbol{\beta(s)}||~|| \boldsymbol{\beta(\hat{s})}||},
    \label{eq:SS_Score}
\end{equation}
where $\boldsymbol{\beta}$, representing BERT~\cite{devlin2019bert}, is a massive pre-trained model, which uses word embeddings, with billions of parameters used to extract semantic information. The $\Phi(s,\hat{s})$ is a number between 0 and 1, indicating how similar the reconstructed sentence is to the transmitted sentence, with 1 indicating the highest similarity and 0 indicating no similarity between $s$ and $\hat{s}$. Word embeddings are vectors that have been mapped to words to assist computers in interpreting language. For example, the words \textit{cat} or \textit{dog} are difficult for a computer to understand, their vector form is more appropriate. One assumption of embedding mapping is that related words should be near each other. Note that word embedding vectors with contextualized embeddings have distinct embeddings for the same word depending on its context. Hence, these embeddings are proposed to be used as sentence-level embeddings. 

\subsubsection{Shared Knowledge Base}
In this paper, we generate the shared knowledge base $\mathcal{K}$ by using the keywords from a limited dataset $\Omega$ which consists of only the relevant words of a particular event, like that of a football game in our case.  We assume that both transmitter and receiver have access to $\mathcal{K}$. 
During the feature extraction process, in every sentence, the words $w \in \Omega$ are encoded into their corresponding symbols and transmitted to the receiver in their corresponding time slots. 
At other time slots, a common symbol is transmitted.
By utilizing $\mathcal{K}$, the receiver reconstructs the sentence based on the received words in that sentence. To improve the accuracy of the reconstructed sentences, we can increase the size of $\mathcal{K}$ by adding more keywords from the vocabulary generated using $\mathcal{X}$. 
This result is shown using simulations in Section~\ref{SubSec:Simu_SemCom}. 

\section{Proposed SemCom System Model} \label{Sec:Prop_SemCom_Model}
\begin{figure*}
\centering
\includegraphics[width=1.0\textwidth]{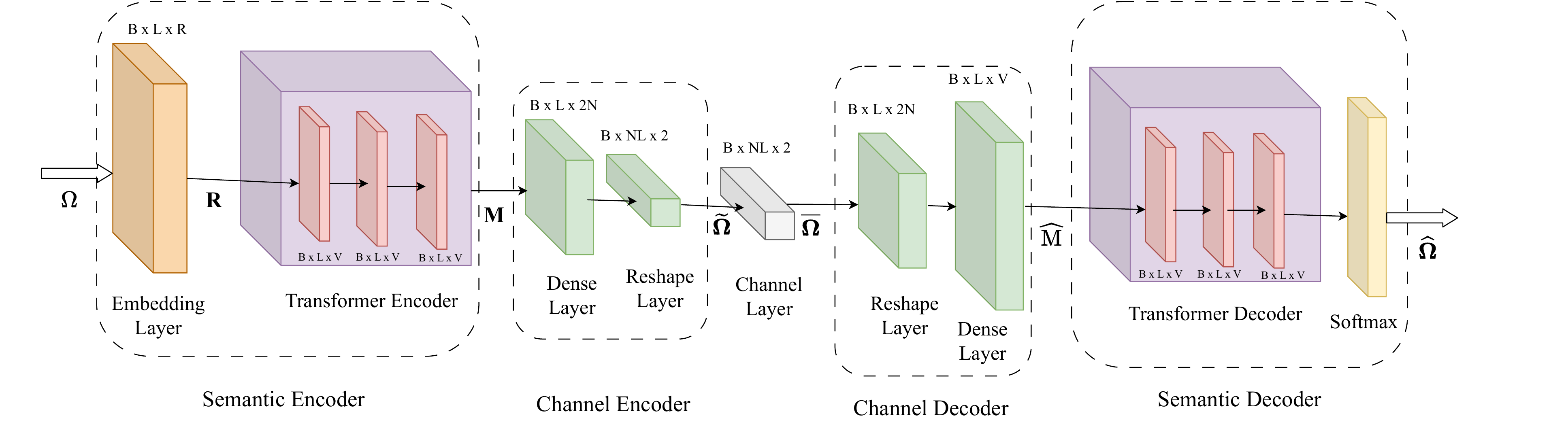}
    \caption{The architecture of the semantic encoder/decoder and channel encoder/decoder models of the proposed SemCom system model.}
\label{fig:SemComArchitecture}
\end{figure*} 
The detailed architecture of the semantic and channel encoder/decoder models is shown in Fig.~\ref{fig:SemComArchitecture}.
The transmitter comprises of a semantic encoder that extracts semantic characteristics from the texts to be broadcast and a channel encoder that generates symbols to assist further transmission. The conceptual encoder has many Transformer encoder layers~\cite{vaswani2017attention}, whereas the channel encoder has dense layers with various units. In the model, the AWGN channel is considered as one layer. As a result, the receiver is composed of a channel decoder for symbol identification and a semantic decoder for text estimation, with the channel decoder consisting of dense layers with varied units and the semantic decoder consisting of several Transformer decoder layers~\cite{vaswani2017attention}. 

Let $\epsilon \triangleq 1-SS$. During training, each step attempts to minimize $\epsilon$ using gradient descent with mini-batch until the stop condition is fulfilled, the maximum number of iterations is achieved, or none of the terms in the loss function are reduced anymore. Unlike separate semantic and channel coding, where the channel encoder/decoder deals with digital bits rather than semantic information, joint semantic-channel coding can maintain semantic information when compressing data~\cite{farsad2018deep}.

\subsection{Training the System Model}
The training of the SemCom model shown in Fig.~\ref{fig:SemComDesignModel}(a) is carried out keeping the SS in mind.
A {\color{black}mini-batch} of sentences $\textbf{S} \in \mathscr{R}^{B\times L}$, are converted to sets of keywords $\bf{\Omega}~$$\in \mathscr{R}^{B\times L}$, where $B$ and $L = \sum_{i=1}^B L_i$ are batch size and total length of the sentences, respectively, is transmitted to the semantic encoder. These sentences can be represented as a dense word vector $\textbf{R} \in \mathscr{R}^{B\times L \times R}$ which are obtained after passing through the embedding layer, where $R$ is the dimension of the word vector. This word vector $\textbf{R}$ is then passed to the Transformer encoder, primary component of Semantic Encoder, to acquire the semantic information $\textbf{M} \in \mathscr{R}^{B\times L \times V}$,  provided by $\bf{\Omega}$, where $V$ is the dimension of the encoder's output. Then, to account for the effects of the physical channel noise, \textbf{M} is encoded into symbols $\widetilde{\bf{\Omega}}$, where $\widetilde{\bf{\Omega}} \in \mathscr{R}^{B\times NL \times 2}$, which constitutes the channel encoder which is implemented using dense layer followed by reshape layer. Next, the receiver receives distorted symbols $\overline{\bf{\Omega}}$ after travelling through the channel. 
The channel decoder layer, implemented using reshape layer followed by dense layer, decodes distorted symbols $\overline{\bf{\Omega}}$ received at the receiver, where $\widehat{\textbf{M}} \in \mathscr{R}^{B\times L \times V}$ is the recovered semantic information of the sources. The semantic decoder layer then estimates the transmitted sentences $\widehat{\bf{\Omega}}$ with the help of sentence generator and the shared knowledge $\mathcal{K}$. Finally, the stochastic gradient descent
(SGD) method is used to optimize the network, using the error $\epsilon$.

\subsection{Performance Analysis} \label{Sec:Performance_Analysis}
Now, we analytically compare the performances of our proposed scheme with those of the DeepSC scheme~\cite{xie2021deep} and the adaptive scheme~\cite{sana2022learning}.  
\subsubsection{Accuracy versus Overhead Reduction Trade-off}
Let $X_i, i = \{1, \ldots, N\}$, denote the $i^{th}$ sentence in the dataset $\mathcal{X}$ and $\omega_{i\ell}, \ell=\{1, \ldots,|X_i|\}$, denote the $\ell^{th}$ word in the sentence $X_i$, then we can write,
\begin{equation}
    X_i = \{\omega_{i\ell} |~ \ell=1, \ldots,|X_i|\}, \forall i \in \{1, \ldots, N\}.
\end{equation} 
In our proposed scheme, as described in the system model (see Section~\ref{Sec:SysModel}), we only transmit the extracted keywords before encoding. Hence, the total number of words present in the dataset $N_0$ and the total number of keywords to be transmitted $N_\tau$, respectively, are
\begin{subequations}
\begin{align}
    N_0 &= \sum_{i=1}^N |X_i|, \\
    N_\tau &= \sum_{i=1}^N |\Omega_i(\tau)|,
\end{align}   
\end{subequations}
where $\Omega_i(\tau), i = \{1, \ldots, N\}$, is the set of keywords present in the sentence $X_i$ for a given accuracy $\tau$. 
Let $n_0$ denote the fixed number of symbols used to represent a word in the DeepSC scheme~\cite{xie2021deep}. So, the total number of symbols used for communicating the whole data in the DeepSC scheme, $\Psi_0$, and the proposed scheme, $\Psi_\tau$, respectively, are as follows:
\begin{subequations}
\begin{align}
    \Psi_0 &= n_0 N_0, \label{eq:SymbolsDeepSC}\\
    \Psi_\tau &= n_0 N_\tau. \label{eq:SymbolsProp}
\end{align}    
\end{subequations}
 Let $p_{i\ell}$ denote the probability of occurrence of the word $\omega_{i\ell}$, i.e.,
\begin{equation}
    p_{i\ell} = \frac{|\omega_{i\ell}|}{\sum_{i=1}^N \sum_{\ell=1}^{|X_i|}|\omega_{i\ell}|} , \ell \!\in \!\{1, \ldots, |X_i|\}, i \!\in \!\{1, \ldots, N\}.
\end{equation}
Now, based on the value $p_{i\ell}$, in the adaptive scheme the number of symbols used to encode the word $\omega_{i\ell}$ is chosen using the following equation:
\begin{equation}
    q_{i\ell} = \min\left(\max(n_{\min},\lfloor n_0 N_0 p_{i\ell}+0.5\rfloor),n_0 \right),
    \label{eq:q_il}
\end{equation}
where $n_{\min} < n_0$ denotes the minimum number of possible symbols. Hence, the total number of symbols used in the adaptive scheme is 
\begin{equation}
 \widehat{\Psi} = \sum_{i=1}^N \sum_{\ell=1}^{|X_i|} q_{i\ell}.   \label{eq:SymbolsAdaptive}
\end{equation}

Let $\alpha_\tau^d$, $\alpha_\tau^a$, and $\alpha_\tau^p$ represent the product of the average number of symbols used for each word and the fraction of total words transmitted required to achieve the accuracy $\tau$ in DeepSC~\cite{xie2021deep}, adaptive scheme~\cite{sana2022learning}, and proposed scheme, respectively. Since all words are transmitted in both DeepSC and adaptive schemes, the $\alpha_\tau$ values for these schemes are as follows: 
\begin{subequations}
\begin{align}
    \alpha_\tau^d &= n_0(\tau), \\
    \alpha_\tau^a &= \widehat{\Psi}(\tau)/N_0,
\end{align}    
\end{subequations}
where $n_0(\tau)$ and $\widehat{\Psi}(\tau)/N_0$ are the average number of symbols required to achieve accuracy $\tau$ in DeepSC and adaptive schemes, respectively. Note that $n_0(1) = n_0$ and $\widehat{\Psi}(1) = \widehat{\Psi}$.  
In case of the proposed scheme, only a fraction of all the words are transmitted. It uses the same number of symbols as that of the DeepSC scheme, that is $n_0(\tau)$, but it is still able to achieve better results due to the transmission of only the keywords. Hence, the value of $\alpha_\tau$ for the proposed scheme is
\begin{equation}
    \alpha_\tau^p = n_0(\tau)\frac{N_\tau}{N_0}.
\end{equation} 
The accuracy vs. overhead reduction {\color{black}trade-off} can be compared among the proposed scheme, DeepSC, and adaptive scheme by measuring the $\alpha_\tau$ values obtained by each of these schemes. 
We numerically compare these values, and the results are shown in Fig.~\ref{fig:Avg_Sym_plots}.

\subsubsection{Cost Comparisons}
Now, we analyse the average costs incurred for transmission of a sentence in various schemes. 
Let $t_\omega$ (in $\mu s$) be the time to check word $\omega$ whether it is a keyword or not from the knowledge base $\mathcal{K}$. Let the cost equivalent of spending time $t_\omega$ on such operation is $c(\omega)$. So the average cost spent on keyword extraction process can be found as
\begin{equation}
    C_k = \frac{1}{N}\sum_{i=1}^N \sum_{\ell=1}^{|X_i|}c(\omega_{i\ell}).
\end{equation}
Next, we compute the costs involved in the transmission process.\footnote{These costs include transmission power, encoding and decoding processes, etc.} Let $\bar{c}(\xi)$ denote the cost of spending resources for the transmission of symbol $\xi$. The total number of symbols used in the schemes DeepSC, proposed, and adaptive, respectively, can be computed by using~\eqref{eq:SymbolsDeepSC},~\eqref{eq:SymbolsProp}, and~\eqref{eq:SymbolsAdaptive}. So the average costs incurred for the proposed ($C_t^p$), DeepSC ($C_t^d$), and adaptive ($C_t^a$)  schemes, respectively, in the transmission process are
\begin{subequations}
    \begin{align}
        C_t^p &= \frac{n_0}{N}\sum_{i=1}^N \sum_{\ell=1}^{|\Omega_i|}\bar{c}(\xi_{i\ell}),\\
       C_t^d &= \frac{n_0}{N}\sum_{i=1}^N \sum_{\ell=1}^{|X_i|}\bar{c}(\xi_{i\ell}),\\   
       C_t^a &= \frac{1}{N}\sum_{i=1}^N \sum_{\ell=1}^{|X_i|}q_{i\ell}\bar{c}(\xi_{i\ell}).
    \end{align}
\end{subequations}
In the proposed scheme, after recovering the set of keywords, the receiver has to reconstruct the sentences for the meaningful recovery. Let $\hat{c}(x)$ be the cost of reconstructing sentence $\mathcal{X}$ at the receiver. So the average cost spent on sentence reconstruction process can be found as
\begin{equation}
    C_r = \frac{1}{N}\sum_{i=1}^N \hat{c}(x_i).
\end{equation}
The average costs incurred for end-to-end transmission of a sentence for DeepSC, adaptive, and proposed schemes are $C_t^d$, $C_t^a$, $C^p = C_k + C_t^p + C_r$, respectively. 
\begin{remark}
    For communication systems with large computing and storage capabilities, pre-processing and post-processing operations like keyword extraction and sentence generation, respectively, incur a marginal cost. Also, in the case of poor channel conditions and highly congested networks, the amount of information (symbols) to be transmitted becomes a crucial factor for efficient data communication. In these scenarios, our proposed scheme outperforms DeepSC~\cite{xie2021deep} and adaptive~\cite{sana2022learning} schemes in terms of average costs incurred, i.e.,  $C^p < C_t^a < C_t^d$.
\end{remark}

\section{Data Allocation Problem} \label{Sec:DataAlloc}
Let us assume that the transmitter and receiver shown in Fig.~\ref{fig:SemComDesignModel} are a cloud server and a data center, respectively. The original copy of the dataset $\mathcal{X}$ is stored in the cloud, and a data center requests a portion of it from the cloud server, say $\overline{X} \subset \mathcal{X}$. The cloud server uses SemCom technology to communicate the requested portion of data to the data center, as described in Section~\ref{Sec:SysModel}. The data center obtains each copy of $\overline{X}_\tau$, for some specific values of $\tau \in [0,1]$, by tuning the parameter $\tau$. It can only make a limited number of copies of these data due to storage constraints. This data center serves a number of users, each of whom has a budget constraint. Assume that the users do not have sufficient memory to store the data. They use data stored in the data center. Once data portions are allocated, users can access them whenever they need. Next, we formulate an optimization problem, which we call \textit{Data Allocation Problem} (DAP), to maximize the profit for this data center given its storage constraint and the budget constraints of its associated users. 

\begin{figure}
\centering
\resizebox{0.94\columnwidth}{!}
{\includegraphics{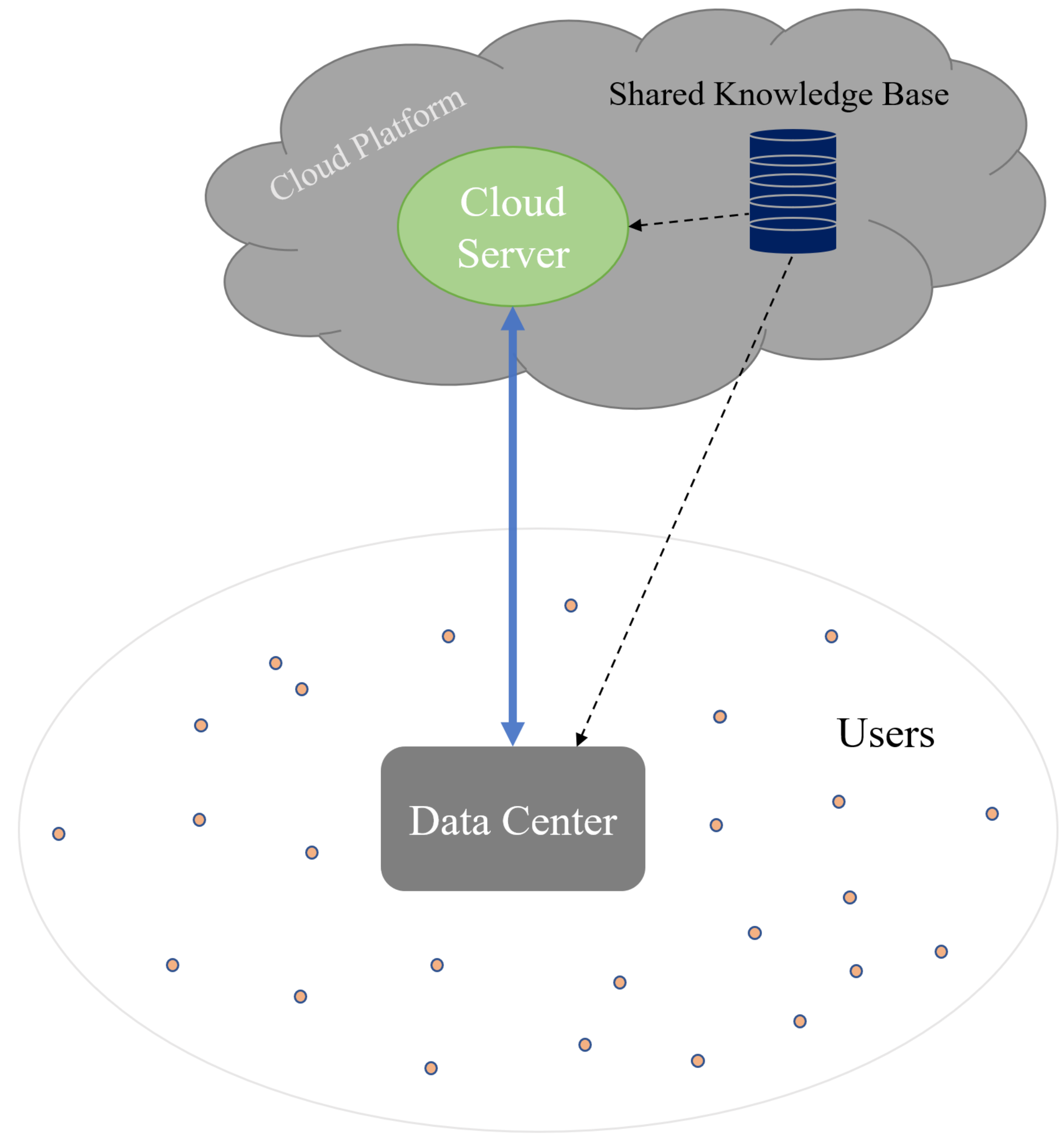}}
\caption{This figure shows the setup used to describe the data allocation problem (DAP).}
\label{fig:Cloud_DataCenter_Network}
\end{figure}

The setup used to describe the DAP is shown in Fig.~\ref{fig:Cloud_DataCenter_Network}.\footnote{{\color{black}Real-world examples of the setup shown in Fig.~\ref{fig:Cloud_DataCenter_Network} are as follows: (a) Microsoft Azure provides storage facilities for the subscribed users in its data center~\cite{AzureStorage}.  The storage options include premium and standard quality options to optimize costs and workload performance. Microsoft Azure guarantees sub-millisecond latency in its storage facility for high throughput and transaction-intensive workloads. The prices quoted for using the storage facility are determined by the users' location ~\cite{AzurePricing}. For example, prices per month of usage for users in the Korean Central Region are $\$0.81, ~\$1.62, ~\$3.24, ~\$5.28, ~\ldots,$ with storage sizes of $4GB, ~8GB, ~16GB, ~32GB, ~\ldots,$ respectively. (b) Amazon Web Services offers its subscribers an easy-to-use, scalable, high-performance block-storage service called Elastic Block Store (EBS)~\cite{AmazonStorage}. The quoted prices for using the EBS facility are decided by the user's location~\cite{AmazonPricing}. For example, for users in the Asia-Pacific (Seoul) region, the prices per month of usage are $\$0.37, ~\$0.72, ~\$1.46, ~\$2.92, ~\ldots,$ for data storage sizes of $4GB, ~8GB, ~16GB, ~32GB, ~\ldots,$ respectively.}} Let $G$ be the number of data categories that the data center has based on the accuracy $\tau$. They are indexed by $i=1, \ldots, G$, such that $\tau \in [\tau_{\min}, \tau_{\max}], \tau_{\min} < \tau_{\max}$, is one-to-one corresponding to $i \in \{1, \ldots, G\}$. The data center can produce $m_i$ copies of $\overline{X}_i, i=1, \ldots, G$. Each copy of data is $z_i$ in size, and its selling cost is $c_i$ and they are related by $z_{i_1} < z_{i_2}$ and $c_{i_1} < c_{i_2}$ for $i_1 < i_2$, $\forall i_1 \in \{1, \ldots, G-1\}, \forall i_2 \in \{2, \ldots, G\}$. These constraints indicate that the sizes and  costs of different categories of data increase as $\tau$ increases, which is shown using the indices $i \in \{1, \ldots, G\}$. Now, we would carefully incorporate the results obtained from~\eqref{eq:min_words} and~\eqref{eq:tau_thr} in terms of size and accuracy, so as to make the DAP meaningful from the perspective of SemCom developed in this paper.

Assume the data center serves $J$ users, with each user having a budget constraint $b_j \ge c_1, \forall j \in \{1, \ldots, J\}$.\footnote{This constraint ensures that every user is eligible to access at least one category of data.} 
Let $U_{i,j}$ represent an indicator variable that returns 1 when $b_j \ge c_i$, which means that the $i^{th}$ category data can be provided to user $j$ when its cost is not more than the user budget and 0 otherwise, i.e., for a given $i \in \{1, \ldots, G\}, j \in \{1, \ldots, J\}$,
\begin{equation}
U_{i,j} =
    \begin{cases}
     1, ~~ \text{if $b_j \ge c_i$,}  \\
     0,~~ \text{else.}
    \end{cases} \label{eq:u_ij}
\end{equation}
Similarly, let $V_{i,j}$ represent an indicator random variable that returns 1 when the $i^{th}$ category data is actually provided to user $j$ and 0 otherwise, i.e.,
\begin{equation}
V_{i,j} =
    \begin{cases}
     1, \text{if category  data $i$ is actually provided to user $j$,}  \\
     0,\text{else.}
    \end{cases}
\end{equation} 
So, the value of $m_i$ can be obtained as follows:
\begin{equation}
    m_i = \sum_{j=1}^J  V_{i,j}, ~\forall i \in \{1, \ldots, G\}. \label{eq:mi_Vij}
\end{equation}
The value of $m_i$ computed using~\eqref{eq:mi_Vij} shows that it also denotes the number of users who are provided with $i^{th}, i \in \{1, \ldots, G\}$, category data. 

Now we consider the purchase price of the data from the cloud server. The limited backhaul capacity between the cloud server and data center constrains the rate of data transfer between them. Due to this, the size of the data, and hence the use of SemCom technology, plays an important role. The purchase prices $d(z_i), i \in \{1, \ldots, G\},$ of different categories of data from the cloud server are based on their sizes, i.e., $d(z_{i_1}) < d(z_{i_2})$  for $i_1 < i_2$, $\forall i_1 \in \{1, \ldots, G-1\}, \forall i_2 \in \{2, \ldots, G\}$. 
Based on this information, an optimization problem, which we call DAP, to maximize the profit of the data center is formulated as follows:
 
\begin{subequations}
\begin{align}
    \max_{\substack{{V_{i,j},}\\{i \in \{1, \ldots, G\},}\\{j \in \{1, \ldots, J\}}}} & ~ \sum_{i=1}^G \left(c_i\sum_{j=1}^J  V_{i,j} - d(z_i) \right)\label{eq:max_profit} \\
     &\sum_{i=1}^G \left(z_i\sum_{j=1}^J  V_{i,j}\right) \le Z, \label{eq:size_constraint} \\
     & \sum_{i=1}^G \sum_{j=1}^J U_{i,j} V_{i,j} = J, \label{eq:category_constraint}
     \\
     & \sum_{i=1}^G V_{i,j} = 1,~j \in \{1, \ldots, J\}, \label{eq:user_constraint}
     \\
     & V_{i,j} \in \{0,1\},\forall i\in \{1, \ldots, G\}, j \in \{1, \ldots, J\}. \label{eq:integer_constraint}
\end{align}
\end{subequations}
The constraint~\eqref{eq:size_constraint} ensures that the total size of all copies of allocated data is within the limit of the maximum permissible size $Z$ at the data center. 
Similarly, the constraint~\eqref{eq:category_constraint} indicates that all data portions are assigned to users while making sure that the cost of every allocated data portion is not more than the user budget (see~\eqref{eq:u_ij}), whereas the constraint~\eqref{eq:user_constraint} indicates that each user $j \in \{1, \ldots, J\}$ is allocated exactly one category of data. The last constraint~\eqref{eq:integer_constraint} indicates that the variables $V_{i,j},\forall i\! \in \!\{1, \ldots, G\}, j\! \in \!\{1, \ldots, J\}$, are binary. 
This makes our optimization problem, DAP, defined in~\eqref{eq:max_profit}-\eqref{eq:integer_constraint} as a type of binary integer programming. 

In general, the integer programming problems are shown to be NP-complete~\cite{karp1972reducibility}. We show that the DAP belongs to the class of NP-complete problems by reducing the well known knapsack problem (KP)~\cite{martello1990knapsack} to it. 

\begin{theorem} \label{Thm_DAP}
The DAP is NP-complete.
\end{theorem}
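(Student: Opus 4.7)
The plan is to establish NP-completeness in two steps: first prove that the decision version of DAP belongs to NP, then prove NP-hardness via a polynomial-time reduction from the $0/1$ Knapsack Problem (KP), as the statement suggests.

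For NP-membership, the assignment matrix $\{V_{i,j}\}\in\{0,1\}^{G\times J}$ itself is a polynomial-size certificate: checking feasibility requires only evaluating the sums and inequalities in \eqref{eq:size_constraint}--\eqref{eq:integer_constraint}, and comparing the objective \eqref{eq:max_profit} against a target profit is an $O(GJ)$ arithmetic computation. Hence the decision version of DAP is in NP.

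For NP-hardness, given a KP instance with items $(w_i,v_i)_{i=1}^n$, capacity $W$, and value target $V$, I would build a DAP instance with $J=n$ users and $G=2n$ categories, pairing each KP item $i$ with a ``take'' category of size $w_i$ and selling cost $M_i+v_i$, and a ``skip'' category of size $0$ and selling cost $M_i$, where $\{M_i\}$ is a rapidly growing offset sequence satisfying $M_{i+1}>M_i+v_i$ (for instance, $M_1:=1$ and $M_{i+1}:=M_i+v_i+1$). I would set the $i$-th user's budget to $b_i=M_i+v_i$, fix $d(z_i)=0$ for all $i$, and take the storage bound to be $Z=W$. The budget design makes user $i$'s own ``skip'' and ``take'' affordable while strictly dominating, in selling cost and (weakly) in storage use, every other-user category that $i$ can afford; therefore in any optimal assignment each user $i$ picks either their own ``take'' or their own ``skip.'' Under this decoupling the DAP reduces to choosing a subset $S\subseteq\{1,\ldots,n\}$ of users who ``take,'' yielding profit $\sum_i M_i+\sum_{i\in S}v_i$ subject to $\sum_{i\in S}w_i\le W$, which is exactly the KP optimization. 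Consequently the DAP instance has profit $\ge\sum_i M_i+V$ if and only if the KP instance is a YES-instance.

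The main conceptual obstacle is that DAP requires \emph{every} user to be assigned exactly one category, whereas KP is a binary selection per item; the ``take/skip'' pair together with the offset trick resolves this by anchoring each user to their own pair and ruling out optimal assignments that migrate a user to a lower-offset slot to free storage elsewhere. The formal check that the strict domination holds in all cases (own-skip versus any affordable other-user category) and that the $M_i$ values remain polynomial in the input size are routine; once these are in place, both directions of the equivalence are immediate. Combined with NP-membership, this establishes that DAP is NP-complete.
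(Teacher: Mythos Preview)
Your argument is correct, but it takes a genuinely different route from the paper. The paper reduces from the \emph{unbounded} knapsack problem via a direct identification $c_i=v_i$, $z_i=w_i$, $m_i=\sum_j V_{i,j}=x_i$, $Z=W$, $G=n$, and it \emph{neutralises} the budget structure by setting every $b_j\ge c_G$ so that $U_{i,j}\equiv 1$; constraints~\eqref{eq:category_constraint}--\eqref{eq:user_constraint} then collapse, and the DAP objective and capacity constraint become those of KP. You instead reduce from $0/1$ knapsack and actively \emph{exploit} the budget structure: the offset ladder $M_1<M_1+v_1<M_2<\cdots$ together with $b_i=M_i+v_i$ pins user $i$ to her own take/skip pair, after which the remaining binary choice is exactly $0/1$ KP. Your construction is more elaborate but also more self-contained; in particular, it handles the ``exactly one category per user'' constraint explicitly, whereas the paper's mapping leaves the choice of $J$ (and hence the implicit constraint $\sum_i m_i=J$ versus the free $\sum_i x_i$ in KP) somewhat underspecified. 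One small wording point: only own-\emph{skip} dominates every lower-index category in both cost and storage (own-\emph{take} need not, since $w_i$ can exceed some $w_k$), but since the dominance of own-skip alone already forces any optimal assignment into the own pair, your conclusion is unaffected.
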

\begin{proof}
The proof is given in Appendix~\ref{Apdx_Thm_DAP}
\end{proof}

\subsection{Greedy Algorithm}\label{sec:greedyalgo}
In Theorem~\ref{Thm_DAP}, we have shown that the DAP belongs to the class of NP-complete problems. Now, we present a greedy algorithm to solve the DAP. First, we identify the condition under which the solution is feasible. From the DAP formulation, it is clear that there is a limit to the number of users that the data center can serve. In the worst-case scenario, all users could be assigned the least desirable data category, $i=1$. The total data size in this case is $z_1$ times $J$ and is limited by $Z$. Hence the condition for the solution to exist is:
\begin{equation}
    J \le \frac{Z}{z_1}.
\end{equation}

Given that the primary goal of the DAP is to maximize profit for the data center while ensuring data allocations to all users, the proposed algorithm allocates the best possible category data to each user based on their budget. This is accomplished by determining $k(j) \in \{1, \ldots, G\}, \forall j \in \{1, \ldots, J\},$ such that $c_{k(j)} \le b_j < c_{k(j)+1}$ (which is same as finding $i$ such that $U_{i,j} = 1$ and $U_{i+1,j} = 0$), and allocating the data category $i=k(j)$ for $j^{th}$ user. This gives $V_{i,j} = V_{k(j),j} =1, \forall j \in \{1, \ldots, J\}$. Next, the algorithm computes the total size due to this allocation policy, i.e., $\overline{Z} = \sum_{j=1}^J z_{k(j)} V_{k(j),j}$. If $\overline{Z} \le Z$, then we have found the solution, $V^{\star}$, of the DAP and it is as follows:
\begin{equation}
V^{\star}_{i,j} =
    \begin{cases}
     1, ~~ \text{if $i = k(j)$,}  \\
     0,~~ \text{else.}
    \end{cases} \label{eq:v_ij_solution}
\end{equation}
And the profit is $\mathrm{P} = \sum_{i=1}^G \left(c_i\sum_{j=1}^J  V^{\star}_{i,j} - d(z_i) \right)$. But if $\overline{Z} > Z$, which implies the violation of the constraint~\eqref{eq:size_constraint}, then the algorithm updates the data allocation policy in the following way. It finds the smallest argument $k(j')$ which minimizes the ratio $r_{k(j)} = (c_{k(j)}/z_{k(j)}), \forall j \in \{1, \ldots, J\}$, and does the following updates using it: $V_{k(j'),j'} = 0$, $V_{k(j')-1,j'} = 1$,
$\overline{Z} \rightarrow \overline{Z} - z_{k(j')} + z_{k(j')-1}$, $k(j') \rightarrow k(j')-1$.\footnote{This approach ensures the smallest possible reduction in selling cost from $\mathrm{P}$ to $(\mathrm{P} - c_{k(j')} + c_{k(j')-1})$ and/or the largest possible data size reduction from $\overline{Z}$ to $(\overline{Z} - z_{k(j')} + z_{k(j')-1})$, which aids in satisfying the constraint~\eqref{eq:size_constraint}. If only the selling cost is considered in place of the ratio, which is the case in most greedy algorithms, the algorithm ignores the impact of data sizes on the DAP. We call this algorithm as \textit{greedy-cost} algorithm and show, using simulations, in Section~\ref{SubSec:Simu_DAP} that the proposed greedy algorithm outperforms the greedy-cost algorithm.} The algorithm again compares $\overline{Z}$, computed with updated value of $k(j')$, and $Z$. This process continues until it encounters  $\overline{Z} \le Z$ and the solution is $V^\star = V$. The detailed algorithm is provided in Algorithm~\ref{alg:greedy}.

\begin{algorithm}
\caption{Greedy Algorithm}\label{alg:greedy}
\begin{algorithmic}[1]
\State \textbf{Input:} $c_i, z_i, d(z_i), U_{i,j}, i \in \{1, \ldots, G\}$, $j \in \{1, \ldots, J\}$, $G, J, Z$
\If {$J \le Z/z(1)$}
\State Initialize $j=1$, $r(1) = \infty$, $i=2$, $V = \textbf{0}_{G \times J}$.
\While {$i \le G$} 
\State $r_i \gets c_{i}/z_{i}$, $i \gets i+1$.
\EndWhile
\While {$j \le J$} 
\State Find $i$ such that $U_{i,j} = 1$ and $U_{i+1,j} = 0$. 
\State $k(j) \gets i$, $V_{k(j),j} \gets 1$, $j \gets j+1$.
\EndWhile
\State Compute $\overline{Z} = \sum_{j=1}^J z_{k(j)}$ (Note: $ V_{k(j),j}=1, \forall j \in \{1, \ldots, J\}$).
\While{(1)}
\If {$\overline{Z} \le Z$}
\State End the algorithm and output $V^{\star} = V$.
\Else
\State Compute $k(j') = \argmin_{k(j), j \in \{1, \ldots, J\}} r_{k(j)}$.
\State $V_{k(j'),j'} \gets 0$, $V_{k(j')-1,j'} \gets 1$,
\State $\overline{Z} \gets \overline{Z} - z_{k(j')} + z_{k(j')-1}$, $k(j') \gets k(j')-1$.
\EndIf
\EndWhile
\Else
\State End the algorithm and display `No feasible solution'.
\EndIf
\State \textbf{Output:} $V^{\star}$ of size $G \times J$, and the profit: $\mathrm{P} = \sum_{i=1}^G \left(c_i\sum_{j=1}^J  V^{\star}_{i,j} - d(z_i) \right)$.
\end{algorithmic}
\end{algorithm}


\subsection{Computational Complexity of the Greedy Algorithm} \label{SubSec:comp_complexity}
Now, we find the computational complexity of the proposed greedy algorithm, if the solution exists. First, we compute the values of $r_i, \forall i \in \{1, \ldots, G\},$ using a loop described in lines 4–7 of Algorithm~\ref{alg:greedy}. This computation results in the time complexity of $\mathcal{O}(G)$. Similarly, we find that the computational complexity of the loop described in lines 8–13 is $\mathcal{O}(J)$. Next, in the loop described in lines 15–24, we compute the argument minimizer in line 19 whose computational complexity is $\mathcal{O}(J)$, and this loop, in the worst case, executes till all $k(j), j\in \{1, \ldots, J\},$ become 1. This happens after $\mathcal{O}(G)$ times execution of the loop. Thus the computational complexity of the loop described in lines 15–24 is $\mathcal{O}(GJ)$. Hence, the total computational complexity of the proposed greedy algorithm in Algorithm~\ref{alg:greedy} is $\mathcal{O}(G+J+GJ)$. 

The computational complexity of finding the solution for the DAP using the brute-force search method is $\mathcal{O}(2^{GJ})$, since it uses all the possible binary matrices of size $G \times J$, sequentially, to compute the solution.

\begin{remark}
   The proposed greedy algorithm is highly efficient in terms of the computational complexity w.r.t. the brute-force search method, i.e.,  $\mathcal{O}(G+J+GJ) ~<<~\mathcal{O}(2^{GJ})$.
\end{remark}

The comparison study of the numerical solutions of the DAP using the proposed greedy algorithm and Gurobi software~\cite{gurobi} as a solver is shown in Section~\ref{SubSec:Simu_DAP}. 

\section{Simulation Results}\label{Sec:Simulations}
\begin{table}
\caption{Simulation hyper-parameters}
    \centering
    \begin{tabular}{|l|l|}
    \hline
        Number of matches used in training & 1580 \\ \hline
        Number of matches used in evaluation & 340 \\ \hline
        Number of epochs during training & 10 \\ \hline
        SNR & 6 dB \\
        \hline
        $\lambda$ & 0.3 \\
        \hline
        Learning rate & 0.001 \\
        \hline
        Dropout rate & 0.1 \\
        \hline
        Batch Size & 64 \\
        \hline
        Channel & AWGN \\ \hline
        Standard Deviation $\sigma$ in AWGN& 0.02\\ \hline
    \end{tabular}
    \label{tab:sim_param}
\end{table}

\begin{table}
\caption{Simulation settings for SemCom Encoder/Decoder layers}
    \centering
    \begin{tabular}{|c|c|c|c|}
    \hline
         & Layer & Units & Activation \\ \hline
         \multirow{3}{4em}{Transmitter} & Transformer Encoder (3) 
         & 128 (8 heads) & Linear \\ 
         & Dense & 256 & Relu \\ 
         & Reshape & 16 & Relu \\ \hline
         Channel & AWGN & None & None \\ \hline
         \multirow{3}{4em}{Receiver} 
         & Reshape & 256 & Relu \\ 
         & Dense & 128 & Relu \\ 
         & Transformer Decoder (3) 
         & 128 (8 heads) & Linear \\ \hline
    \end{tabular}
    \label{tab:SemCom_param}
\end{table}
In this section, we first provide the simulation results related to the designed SemCom system in Section~\ref{SubSec:Simu_SemCom} and then provide the  results related to the DAP in Section~\ref{SubSec:Simu_DAP}.

\subsection{The performance of SemCom System Model} \label{SubSec:Simu_SemCom}
First, we evaluate the performance of the text data transmission in terms of accuracy using BLEU score~\cite{papineni2002bleu}.\footnote{We defined the BLEU score in~\eqref{eq:BLEUScore}.} In our work, we use the dataset provided in~\cite{zhang2021soccer}. We parse the football commentary data of 1920 matches from the website \url{goal.com}. The considered football matches are from Union of European Football Associations (UEFA) Champions League, UEFA Europa League, and Premier League between 2016 and 2020. The simulations are performed in a computer with NVIDIA GeForce RTX 3090 GPU and Intel Core i9-10980XE CPU with 256GB RAM.

{\color{black} The simulation hyper-parameters used for plots in this section are shown in Table~\ref{tab:sim_param}. The simulation settings of the proposed SemCom system architecture (see Fig.~\ref{fig:SemComArchitecture}) consist of three transformer encoder and decoder layers with eight heads each. The dense layers in the transmitter and receiver are 256 units and 128 units, respectively. Similarly, the reshape layers in the transmitter and receiver are 16 units and 256 units, respectively. The linear activation functions are used in the encoders and decoders, whereas Relu activation functions are used in the rest of the layers. These settings are also listed in Table~\ref{tab:SemCom_param}.
} 

Let $\rho$ be the fraction of the total vocabulary $V$, which contains all the dataset words, to be added to $\mathcal{K}$. $\rho=0$ indicates that no additional vocabulary is added and the system is evaluated  only with the initial keyword set $\Omega_0$. Based on the way of adding the vocabulary words to $\Omega_0$, we propose two types of schemes. In the first type, $\rho |V|$ vocabulary words are uniformly chosen at random from $V$ and added to $\mathcal{K}$. In the second type, the words in $V$ are first arranged in the decreasing order of the frequency of appearances in the dataset, and then the first $\rho |V|$ vocabulary words are added to $\mathcal{K}$. We call these schemes as `RANDOM' and `ORDERED', respectively. 

The accuracy performances of both the schemes and a deep learning based SemCom system method named DeepSC~\cite{xie2021deep}, in terms of BLEU score vs. $\rho$, are shown in Fig.~\ref{fig:BLEU}. From the plot we can infer that even with $\rho=0$, the initial keyword set can produce a BLEU score of 0.55 (for 1-gram). This shows that the context-related keywords produce good results. Also, we see that as we add more vocabulary words to $\Omega_0$, the BLEU score increases. 
For the same value of $\rho$ and $n$, the ORDERED scheme performs better than the RANDOM scheme because of the addition of high frequency words.
And, in terms of different $n$-grams, BLEU score decreases as $n$ increases, which is an expected result. In comparison to the DeepSC scheme, the proposed schemes perform poorly in terms of accuracy but outperform it in terms of overhead reduction, as shown below.  

Next, we evaluate the performance of the proposed schemes, in terms of the transmission of average number of words per sentence, with respect to DeepSC~\cite{xie2021deep} and the results are shown in Fig.~\ref{fig:w_bar_vs_rho}. Let $\overline{W}$ denote the average number of words per sentence. From the plot we observe that both the schemes outperform DeepSC. Among the proposed schemes, for a given $\rho$ the RANDOM scheme outperforms the ORDERED scheme. This is because, in the ORDERED scheme high frequency words are added which increases the number of words to be encoded in the input data as compared with the RANDOM scheme. 

Now, we solve the optimization problem presented in~\eqref{eq:min_words} and~\eqref{eq:tau_thr}, with $\lambda=0$, using both the proposed schemes. For this purpose, we evaluate $\overline{W}$ vs. $\tau$ and the results are shown in Fig.~\ref{fig:w_bar_vs_tau}. From the plot we observe that both the schemes outperform DeepSC. Also, we see that the performance of both the schemes is same for a given accuracy threshold $\tau$. This is because, as shown in Fig.~\ref{fig:BLEU}, for a given value of $\rho \in (0,1)$, the ORDERED scheme outperforms the RANDOM scheme in terms of accuracy, whereas in Fig.~\ref{fig:w_bar_vs_rho}, the RANDOM scheme outperforms the ORDERED scheme in terms of overhead reduction. Hence, we can choose any one of the proposed methods to solve the optimization problem. 
\begin{figure}
\centering
\resizebox{0.88\columnwidth}{!}
{\includegraphics{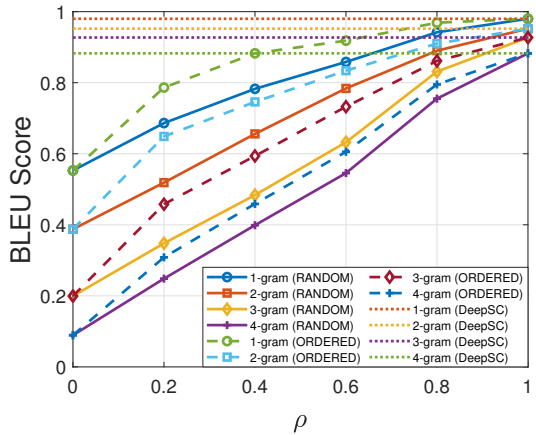}}
\caption{This plot shows the BLEU score vs. $\rho$ for different values of $n$-grams, where $n=\{1,2,3,4\}$, for the proposed schemes and the DeepSC scheme~\cite{xie2021deep}.}
\label{fig:BLEU}
\end{figure}

\begin{figure}
\centering
\begin{subfigure}{.24\textwidth}
\centering
\begin{adjustbox}{width = 1\columnwidth}
\includegraphics[width=0.99\textwidth]{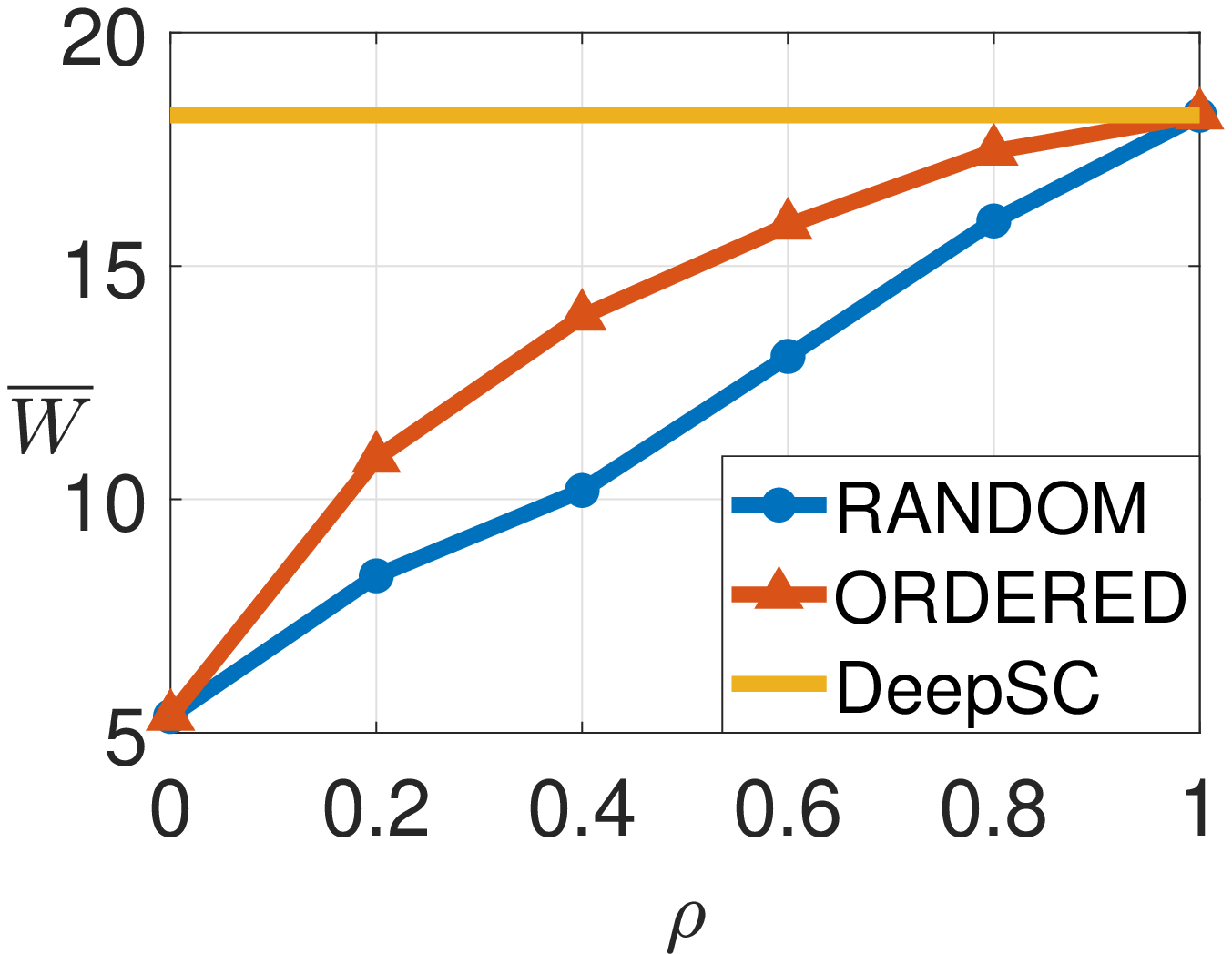}
\end{adjustbox}
\caption{}
\label{fig:w_bar_vs_rho}
\end{subfigure}%
\begin{subfigure}{.24\textwidth}
\centering
\begin{adjustbox}{width = 1\columnwidth}
\includegraphics[width=0.99\textwidth]{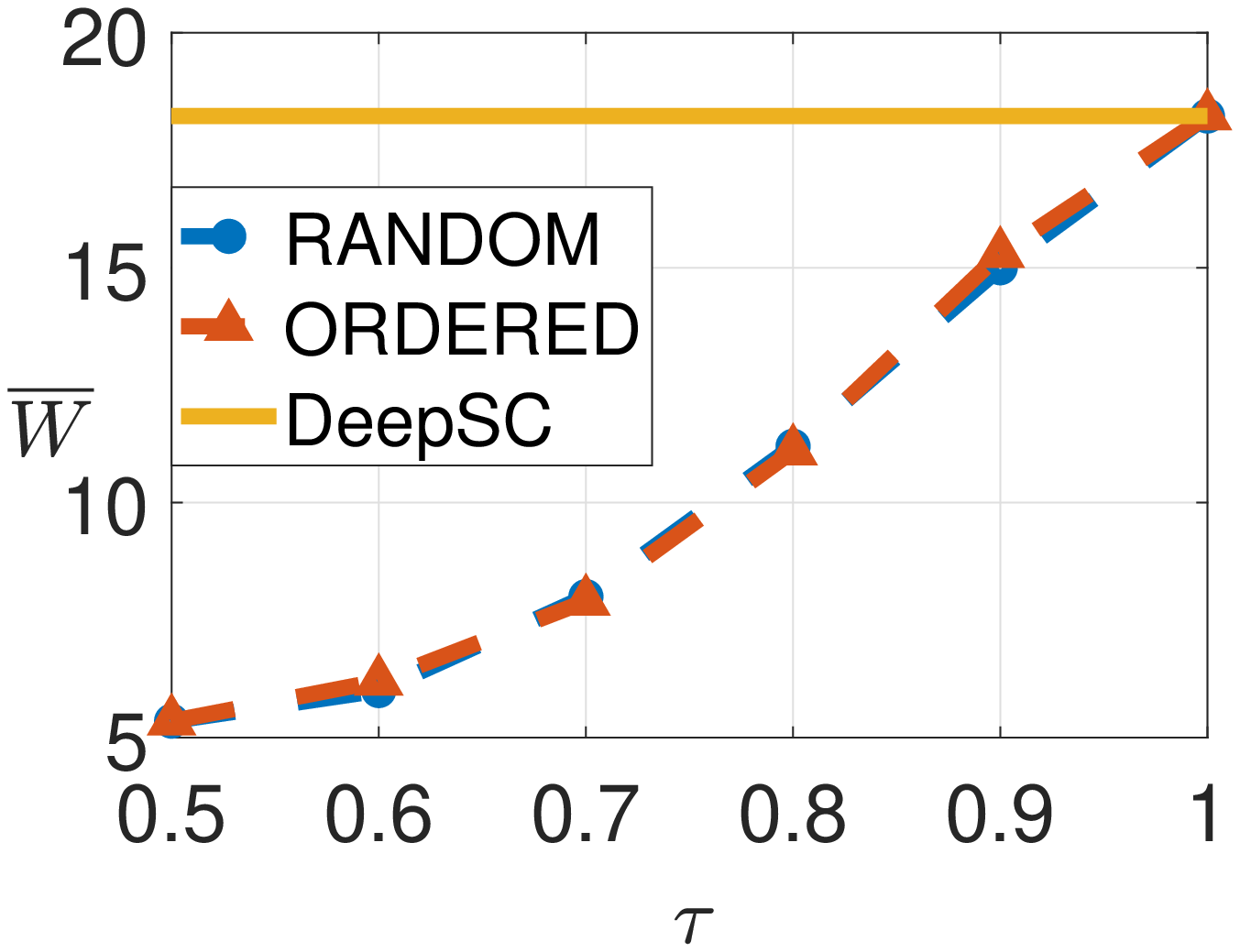}
\end{adjustbox}
\caption{}
\label{fig:w_bar_vs_tau}
\end{subfigure}
\caption{These plots show the average number of words per sentence vs. $\rho$ in the left plot and vs. $\tau$ in the right plot, respectively, for the proposed schemes and the DeepSC scheme~\cite{xie2021deep}.}
\label{fig:W_bar_plots}
\end{figure}

Now, we evaluate the performance of the proposed schemes, in terms of the semantic score (see~\eqref{eq:SemanticScore}), with respect to DeepSC~\cite{xie2021deep} and Joint Source-Channel Coding (JSCC) schemes~\cite{farsad2018deep}, and the results are shown in Fig.~\ref{fig:SemanticScore_vs_rho}. The trends are similar to that of Fig.~\ref{fig:BLEU}, and the proposed schemes outperform JSCC when $\rho$ approaches 0.8. Next, we compare the performance of the proposed scheme (with $\rho = 0.8$) in terms of different SNR values and the results are shown in Fig.~\ref{fig:SemanticScore_vs_SNR}. As expected, the semantic score increases as SNR increases due to a reduction in the noise effect for all the schemes but saturates for higher SNR values. When compared with DeepSC, the performance is slightly poor, but the proposed scheme follows the trends of DeepSC. But in comparison to JSCC, the proposed scheme outperforms it for all SNR values.
\begin{figure}
\centering
\begin{subfigure}{.24\textwidth}
\centering
\begin{adjustbox}{width = 1\columnwidth}
\includegraphics[width=0.99\textwidth]{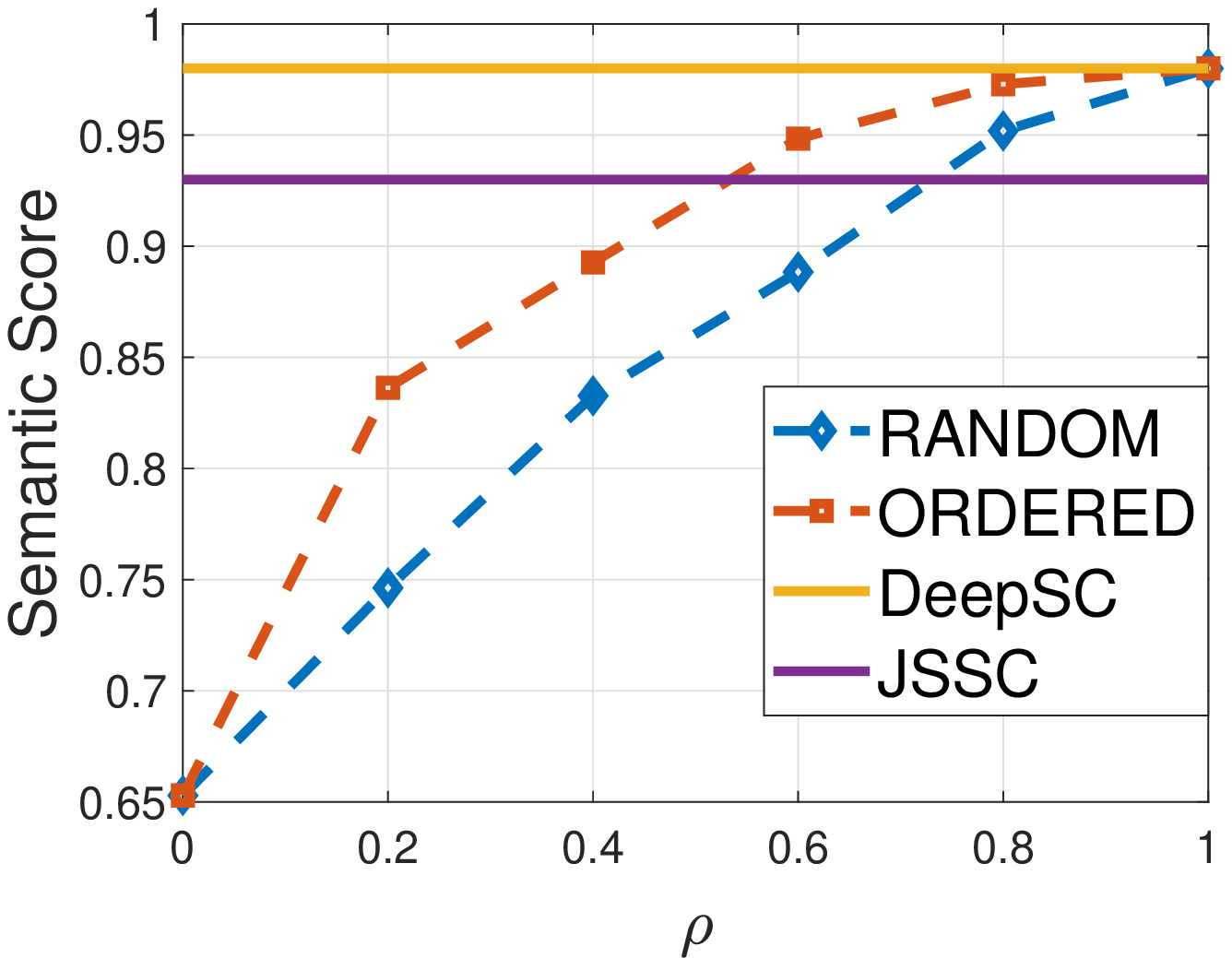}
\end{adjustbox}
\caption{}
\label{fig:SemanticScore_vs_rho}
\end{subfigure}%
\begin{subfigure}{.24\textwidth}
\centering
\begin{adjustbox}{width = 1\columnwidth}
\includegraphics[width=0.99\textwidth]{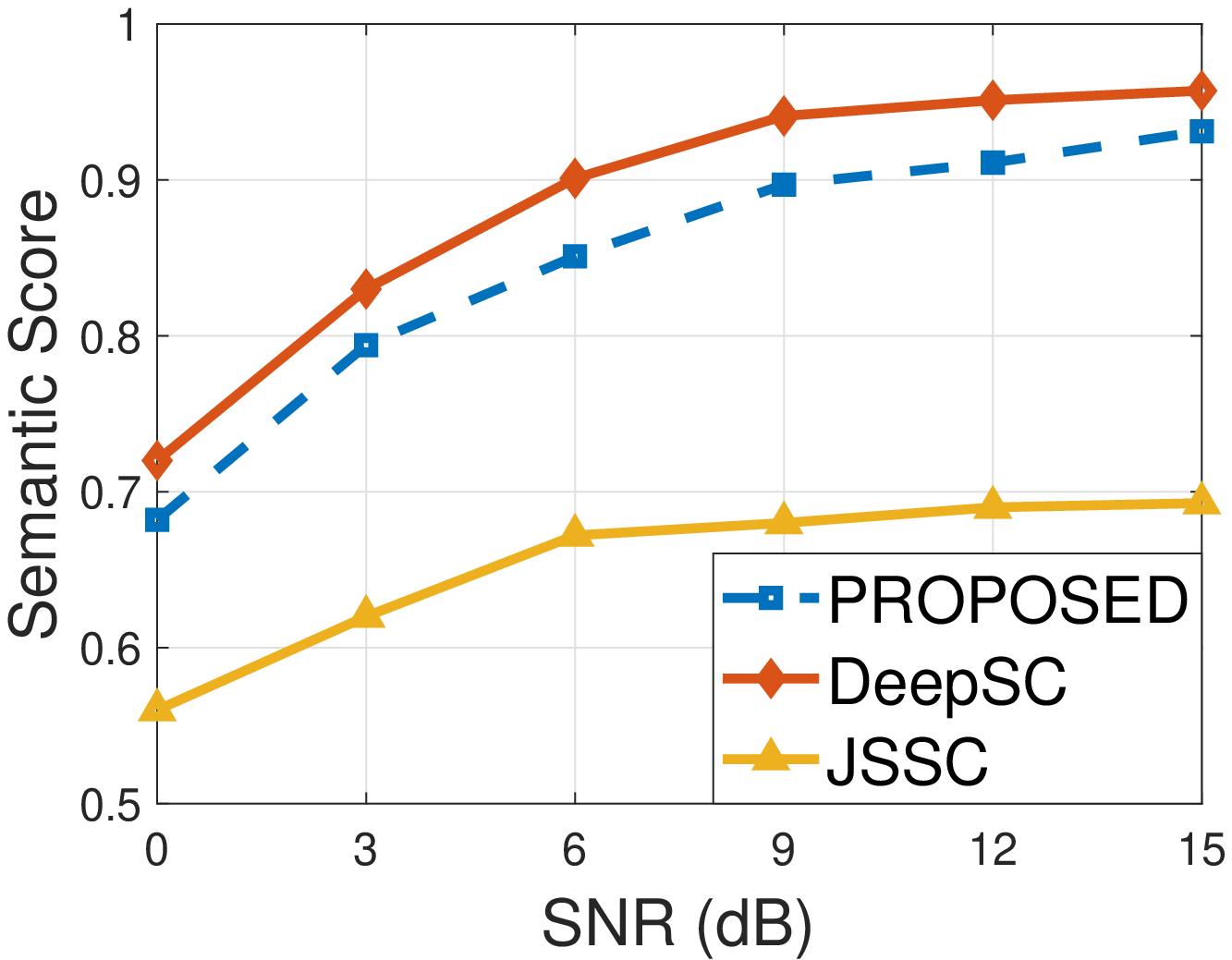}
\end{adjustbox}
\caption{}
\label{fig:SemanticScore_vs_SNR}
\end{subfigure}
\caption{These plots show the semantic score (SS) vs. $\rho$ in the left plot and vs. SNR (in dB) in the right plot, respectively, for the proposed schemes, DeepSC scheme~\cite{xie2021deep}, and JSCC~\cite{farsad2018deep}.}
\label{fig:SemanticScore_plots}
\end{figure}

Now, we compare the performance of our scheme with those of the DeepSC~\cite{xie2021deep} and adaptive~\cite{sana2022learning} schemes. Recall from Section~\ref{Sec:Performance_Analysis} that in the DeepSC and  proposed schemes, an average $n_0$ number of symbols are used for every word during encoding, and the adaptive scheme proposed in~\cite{sana2022learning} uses an adaptive method for choosing the number of symbols for every word that depends on the size of that word (see~\eqref{eq:q_il}). 
We show the comparisons among the proposed method and the schemes proposed in~\cite{xie2021deep,sana2022learning} in terms of $\widehat{\Psi}, \Psi_0,$ and $\Psi_\tau$, in Fig.~\ref{fig:sym_vs_rho}. From this plot, we observe that the proposed scheme transmits a significantly smaller number of symbols compared with both schemes, from $\rho=0$ to $\rho = 0.6$.
Next, we compare the performance of our scheme in terms of $\alpha_\tau^d$, $\alpha_\tau^a$, and $\alpha_\tau^p$ vs. the accuracy parameter $\tau$. The comparisons are shown in Fig.~\ref{fig:alpha_vs_tau}. From this plot, we observe that the proposed scheme outperforms both  schemes for accuracy levels {\color{black}up to} $82\%$. 
\begin{figure}
\centering
\begin{subfigure}{.24\textwidth}
\centering
\begin{adjustbox}{width = 1\columnwidth}
\includegraphics[width=0.99\textwidth]{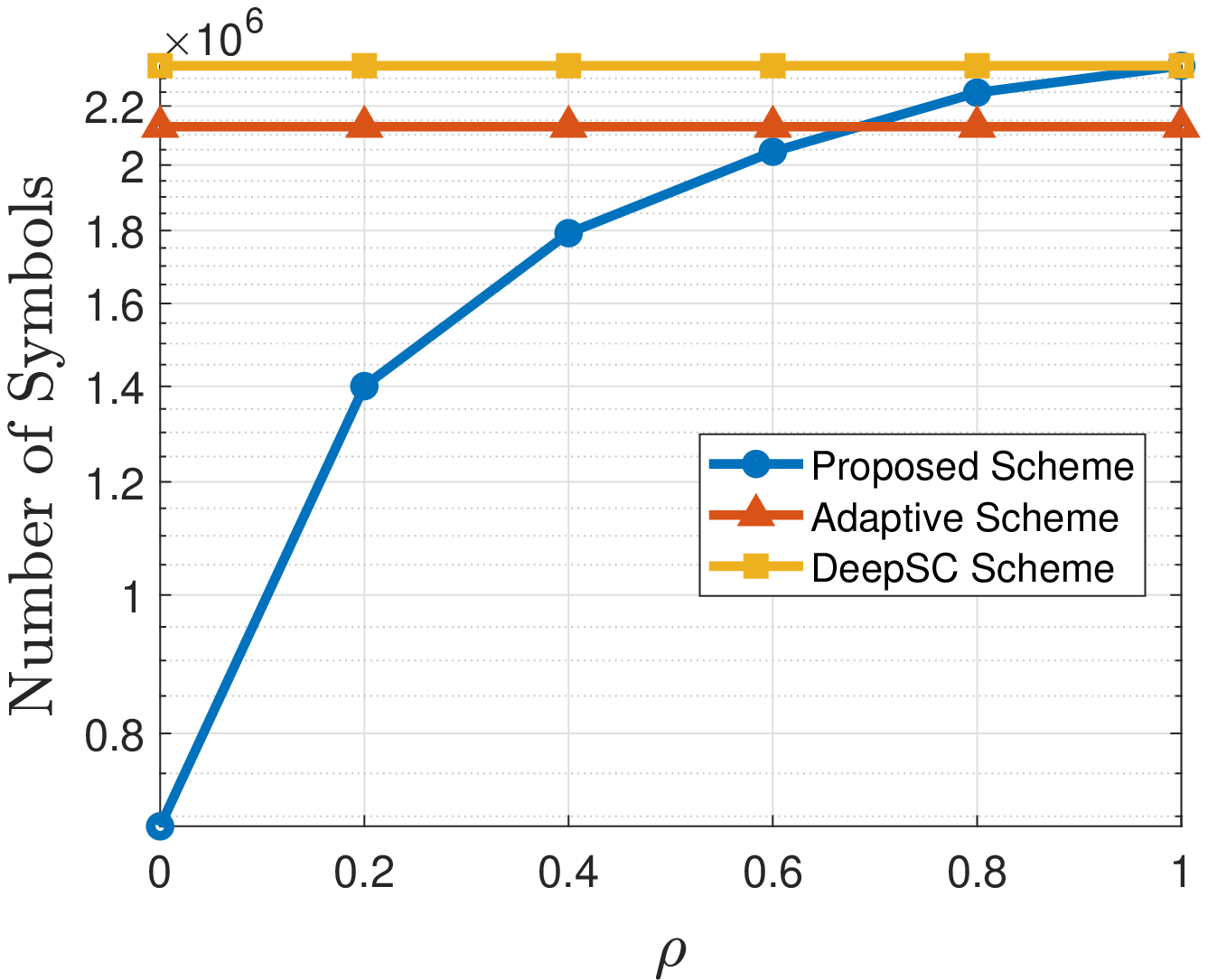}
\end{adjustbox}
\caption{}
\label{fig:sym_vs_rho}
\end{subfigure}%
\begin{subfigure}{.24\textwidth}
\centering
\begin{adjustbox}{width = 1\columnwidth}
\includegraphics[width=0.99\textwidth]{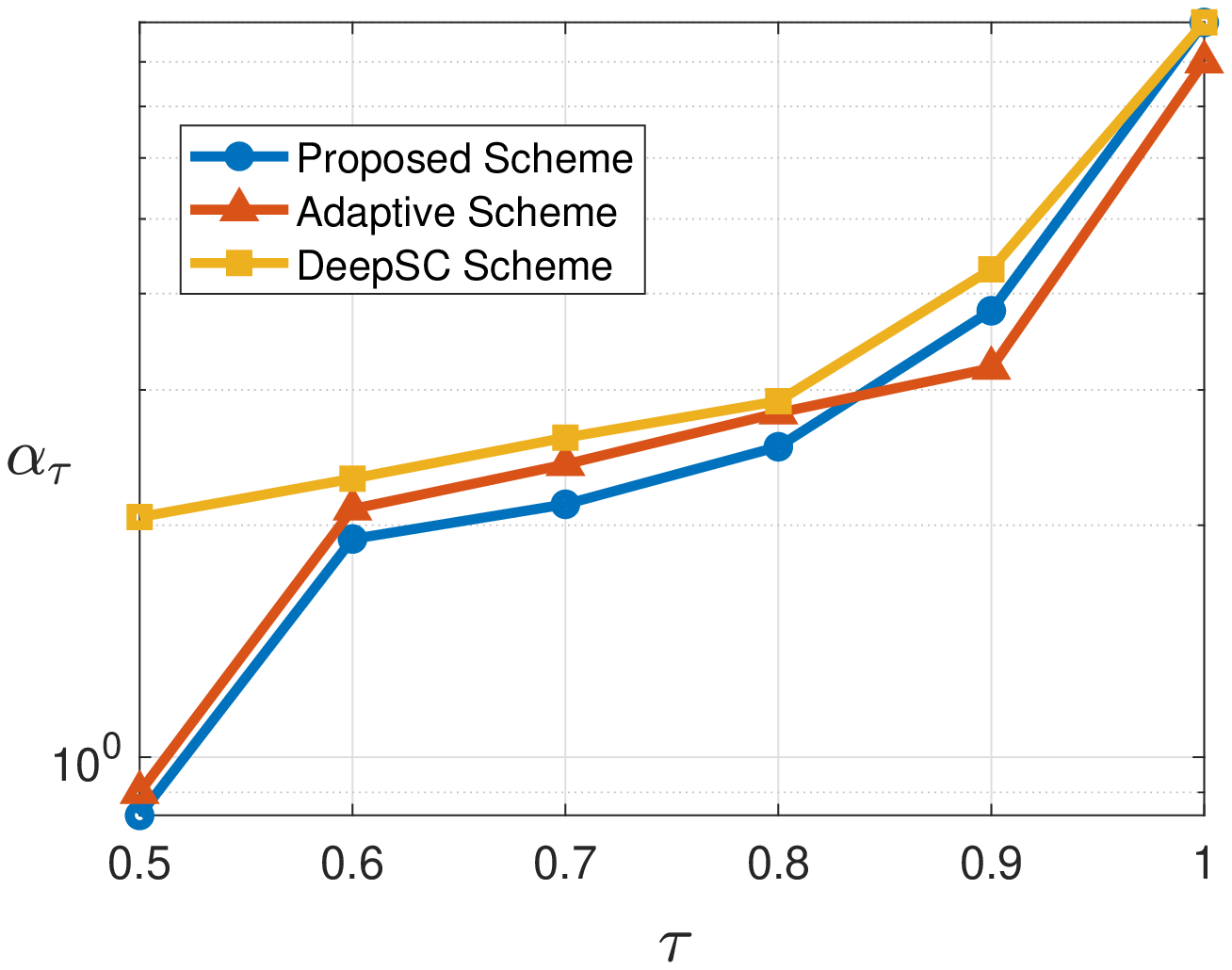}
\end{adjustbox}
\caption{}
\label{fig:alpha_vs_tau}
\end{subfigure}
\caption{The plot in left  shows the total number of symbols used in each of the schemes with respect to $\rho$.  We use $n_{\min} =1$, $n_0 =4$. The plot in right shows the $\alpha_\tau$ values in each of the schemes with respect to the given accuracy $\tau$.}
\label{fig:Avg_Sym_plots}
\end{figure}

\subsection{Solutions of the Data Allocation Problem} \label{SubSec:Simu_DAP}
Now, we present the simulation results related to the solution of the DAP defined in~\eqref{eq:max_profit}--\eqref{eq:integer_constraint}. We solve the DAP using three methods: \textit{Optimal}, \textit{Greedy}, and \textit{Greedy-cost}. We refer to the solutions obtained by the Gurobi software~\cite{gurobi} and the greedy algorithm (see Algorithm~\ref{alg:greedy}) as Optimal and Greedy, respectively. Similarly, the solution obtained by the algorithm, which is the same as that of the greedy algorithm, except that the argument minimizer minimizes the cost $c_i$ instead of the ratio $c_i/z_i$, $\forall i \in \{1, \ldots, G\}$, in line 19 of the proposed Algorithm~\ref{alg:greedy} is called greedy-cost. The primary goal of using the greedy-cost algorithm is to demonstrate numerically that maximizing profit by greedily changing only the costs does not produce better results than the proposed greedy algorithm. In our simulations, we have assumed that the data center has purchased a standard persistent disk (PD) from Google cloud~\cite{googlecloud} and has a memory capacity of $Z = 64$TB, minimum and maximum data sizes are 10GB and 100GB, respectively, and $G=20$. The costs and sizes are chosen uniformly at RANDOM from $[0,1]$ and $[10, 100]$, respectively, and the number of iterations in our simulations is 25. 
\begin{figure}
\centering
\begin{subfigure}{.24\textwidth}
\centering
\begin{adjustbox}{width = 1\columnwidth}
\includegraphics[width=0.99\textwidth]{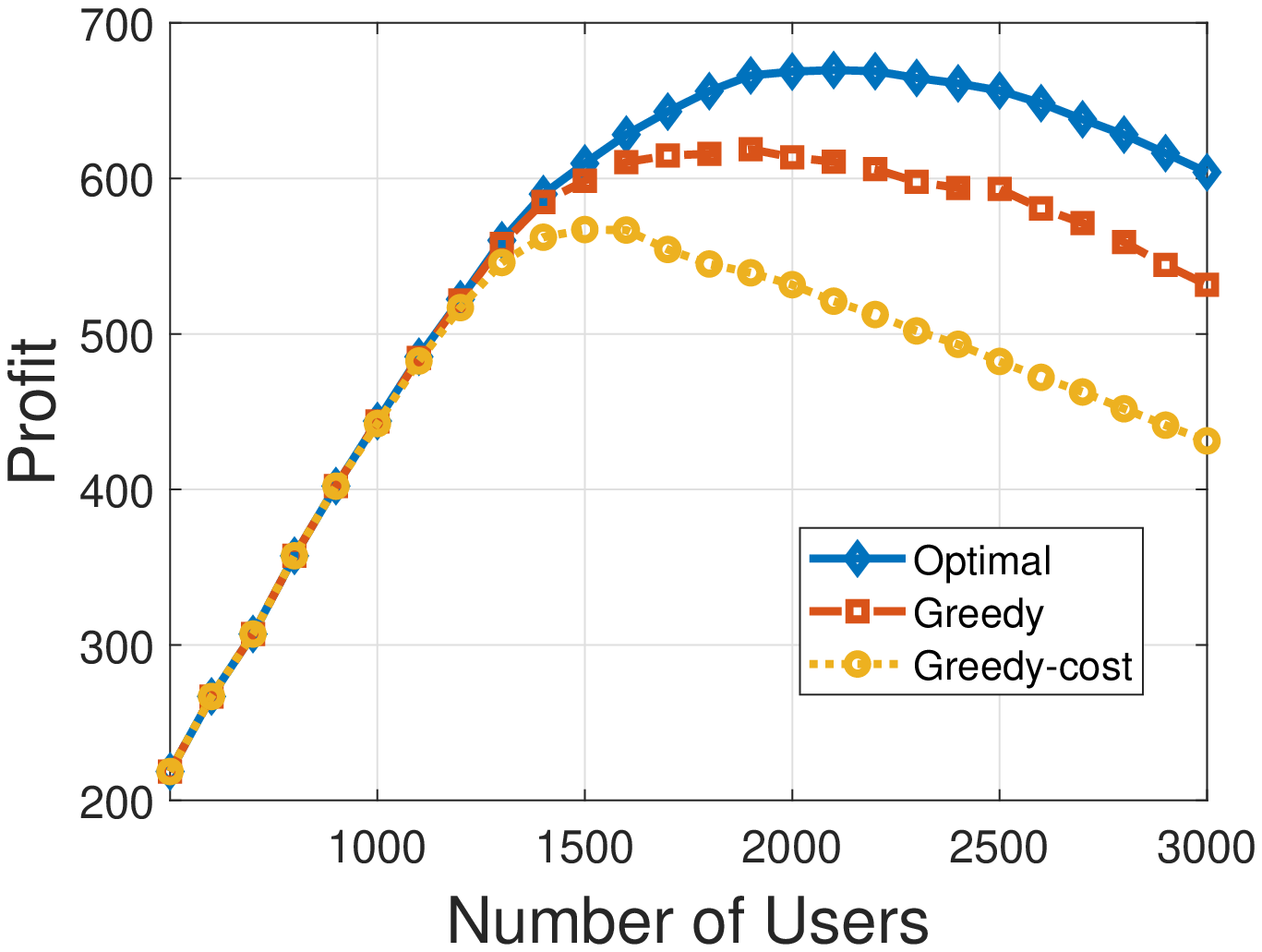}
\end{adjustbox}
\caption{}
\label{fig:Profit_vs_J}
\end{subfigure}%
\begin{subfigure}{.24\textwidth}
\centering
\begin{adjustbox}{width = 1\columnwidth}
\includegraphics[width=0.99\textwidth]{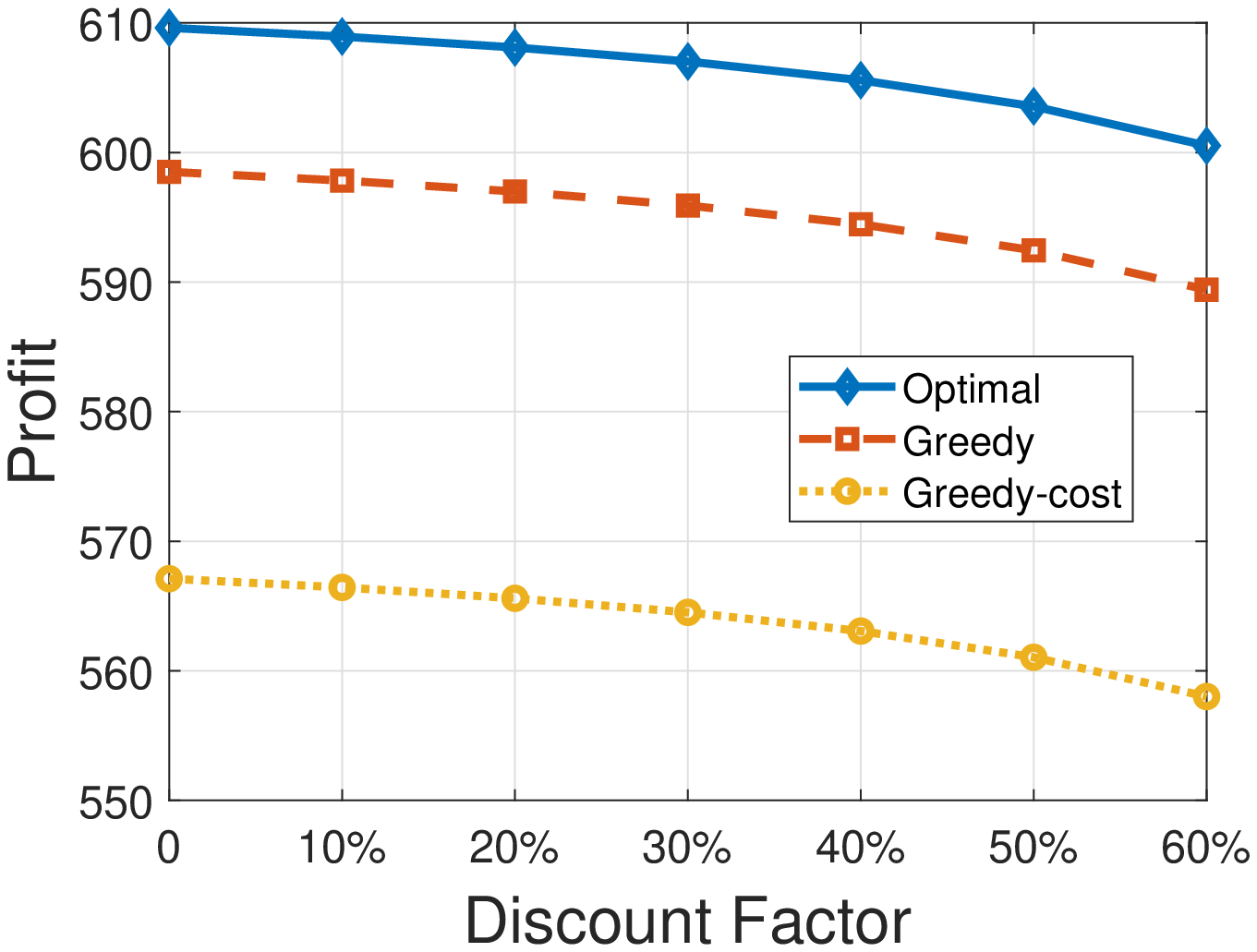}
\end{adjustbox}
\caption{}
\label{fig:Profit_vs_f}
\end{subfigure}
\caption{The plot in left shows the total profit gained using all three algorithms with respect to the number of subscribed users $J$. The maximum profit is observed for $J=2100$, and the profit computed by the proposed greedy algorithm (respectively, greedy-cost algorithm) at the same value of $J$ is $90.54\%$ (respectively, $77.76\%$) of the optimal maximum profit. The plot in right shows the total profit gained using all three algorithms with respect to the discount factor. The average fall of the profit with each discount factor for the optimal, greedy, and greedy-cost algorithms is $0.653\%$, $0.666\%$, and $0.706\%$, respectively. This shows that the discount factor does not affect the profit significantly.  Hence, there is a room to attract more subscribers without loosing the significant profit. We use $J=1500$ in this case.}
\label{fig:profit_plots}
\end{figure}


First, we compute the total profit gained by the data center with respect to the number of users it serves, and the results obtained by all three methods are shown in Fig.~\ref{fig:Profit_vs_J}. From the plot, we can observe that for a set of a smaller number of users, in particular from $J=500$ to $J=1200$ in our case, the profit computed by all three methods is the same. This is because every method is successful in allocating the best possible category data to every user without violating the size constraint~\eqref{eq:size_constraint}, due to the small number of users. As the number of users increases the profit obtained by \textit{Optimal} starts outperforming both greedy algorithms. The proposed greedy algorithm solution closely follows the optimal solution, but the greedy-cost algorithm solution starts moving away significantly from the optimal solution. This is because the greedy-cost algorithm only accounts for the cost maximization without bothering about the data sizes, which results in violation of the size constraint~\eqref{eq:size_constraint} more often than the proposed greedy algorithm, which accounts for both the costs and the data sizes. The plot in Fig.~\ref{fig:Profit_vs_J} also shows that the profit increases initially for all three algorithms, then reaches its maximum value and begins to decrease again.  

Next, we evaluate the performance of the algorithms in terms of profit gained with respect to the discount factor, which is the percentage discount given by the data center to its users in comparison to the purchase price of the data to attract more new subscribers. The results are shown in Fig.~\ref{fig:Profit_vs_f}. As expected, the optimal solution outperforms both greedy algorithms, and also, the proposed greedy algorithm outperforms the greedy-cost algorithm, for all discount factors. Also, as the amount of discount increases, the profit for a fixed number of users reduces, which is also along expected lines. 

\begin{figure}
\centering
\resizebox{0.9\columnwidth}{!}
{\includegraphics{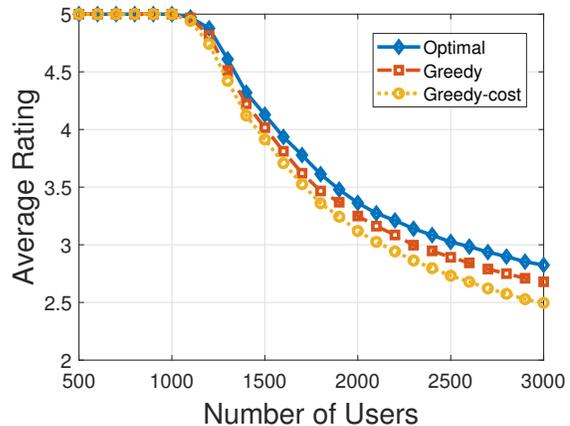}}
\caption{This plot shows the average rating given by users for the data center's service using each of the algorithms w.r.t. the number of users. For $J=2100$ users, where profit is maximized, the average ratings are 3.27, 3.18, and 3.02 for services provided using the optimal, greedy, and greedy-cost algorithm solutions, respectively. This result demonstrates that the average rating provided for the optimal solution is not significantly higher compared to that of the proposed greedy algorithm solution. 
}
\label{fig:AvgRating_vs_J}
\end{figure}
Now, we evaluate the users' satisfaction with the service provided by the data center by using their ratings. The ratings are provided by users using one of the numbers between 1 and 5, where 1 and 5 signify the worst and best user experiences, respectively. Let $\bar{i}(j),\Tilde{i}(j) \in \{1, \ldots, G\},$ be the quantities such that $U_{\bar{i}(j),j} = 1$ and $U_{\bar{i}(j)+1,j} = 0$, and $V^{\star}_{\Tilde{i}(j),j}=1,~j \in \{1, \ldots, J\}$. Let $\mathrm{SL}(j)$ denote the \textit{satisfaction level} of the users $j \in \{1, \ldots, J\}$. We define satisfaction level as $\mathrm{SL}(j) = \bar{i}(j) - \Tilde{i}(j),~j \in \{1, \ldots, J\}$. For the purpose of evaluation, we assume that the ratings and satisfaction levels are related as shown in Table~\ref{tab:sat_level_rating}.\footnote{{\color{black} The relation between satisfaction level of user $j$ and user $j^{th}$ ratings is studied in different contexts, such as video streaming~\cite{karim2019quality}, multimedia communications~\cite{ghinea1999approach}, information retrieval~\cite{al2010review}, audio transmission~\cite{wilson2000investigating}, speech transmission~\cite{watson2000good}, m-Commerce~\cite{ghinea2004user}, etc. The satisfaction level measures quality of service (QoS) provided by the data center, whereas user ratings measure quality of experience (QoE) experienced by the users. The values of satisfaction levels, scaled to 0–20, and their corresponding user ratings, scaled to 1–5, shown in Table III, are compatible with those of the work presented in~\cite{kawalek1995user}. A user gives a rating of 5 for a service in which the data center provides the same quality content that the user expects. Then the user tends to provide poor ratings as the data center provides poor-quality products.}} The plot in Fig.~\ref{fig:AvgRating_vs_J} shows the average rating provided by the subscribed users for the services provided by all three algorithms. From the plot we observe that all users provide the rating of 5 to the service when number of subscribers is low, i.e., $500 \le J \le 1100$ in our example. This is due to the lower number of subscribers, which resulted in the best possible category of data allocation based on each subscriber's budget. However, as $J$ increases beyond $1100$, every algorithm starts allocating lower level categories of data to some of the subscribers so that the size constraint~\eqref{eq:size_constraint} is not violated.  Hence, there is a fall in the average rating.  

The histograms of the ratings provided by the users are shown in Fig.~\ref{fig:J_vs_SL}. These plots support the observation that when $J$ is small, the size constraint~\eqref{eq:size_constraint} is easily satisfied, and thus the best possible category of data is allocated to each user. Conversely, when $J$ is large, to satisfy the size constraint~\eqref{eq:size_constraint} algorithms tend to allocate a lower category of data to users, resulting in lower ratings. 
\begin{table}
\caption{Relation between the satisfaction levels and user ratings}
    \centering
    \begin{tabular}{|c|c|c|c|c|c|}
    \hline
        Satisfaction Level & 0 & 1-2 & 3-5 & 6-10 & 11-20 \\ \hline
        User Rating & 5 & 4 & 3 & 2& 1 \\ \hline
    \end{tabular}
    \label{tab:sat_level_rating}
\end{table}

\begin{figure}
\centering
\begin{subfigure}{.24\textwidth}
\centering
\begin{adjustbox}{width = 1\columnwidth}
\includegraphics[width=0.99\textwidth]{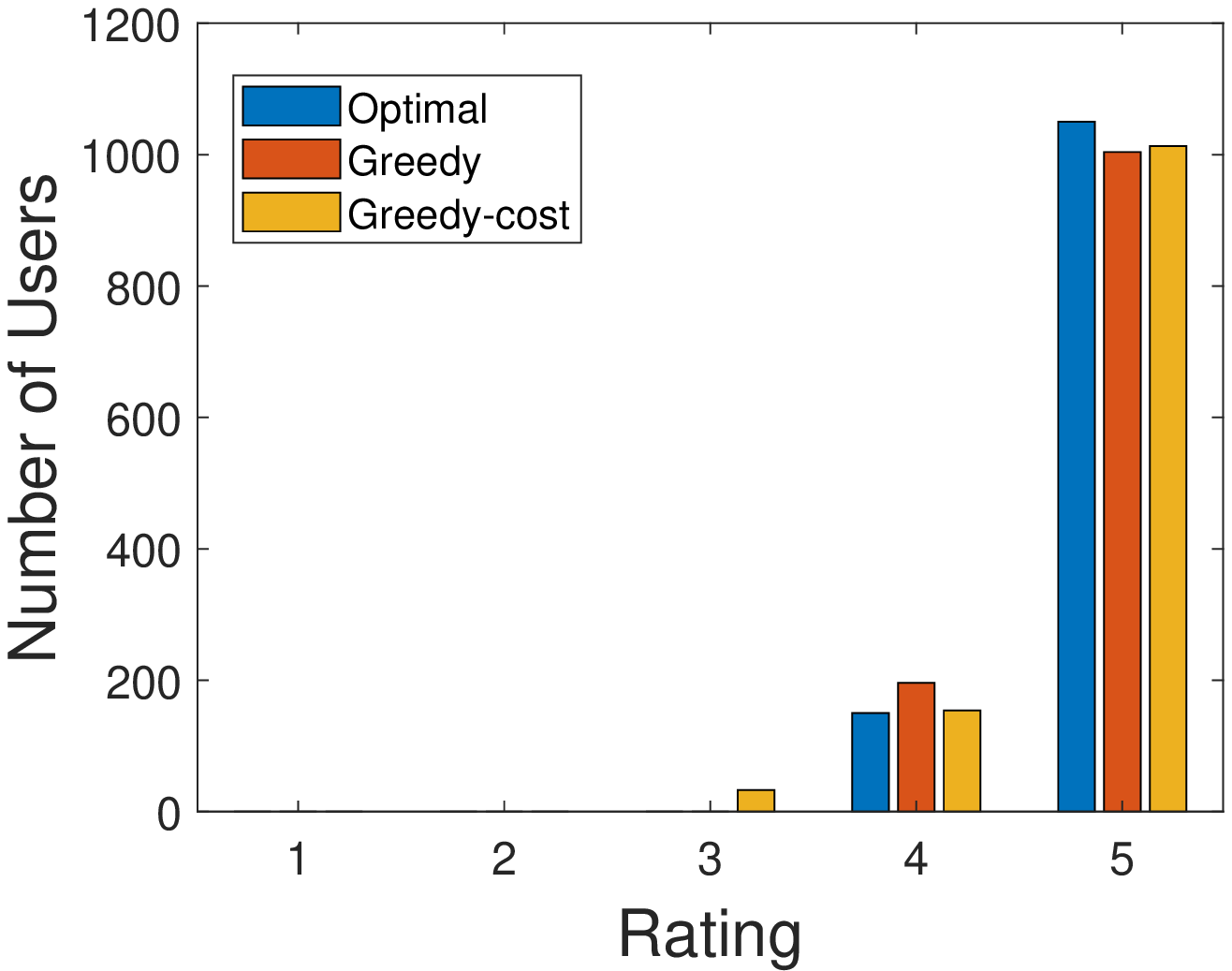}
\end{adjustbox}
\end{subfigure}%
\begin{subfigure}{.24\textwidth}
\centering
\begin{adjustbox}{width = 1\columnwidth}
\includegraphics[width=0.99\textwidth]{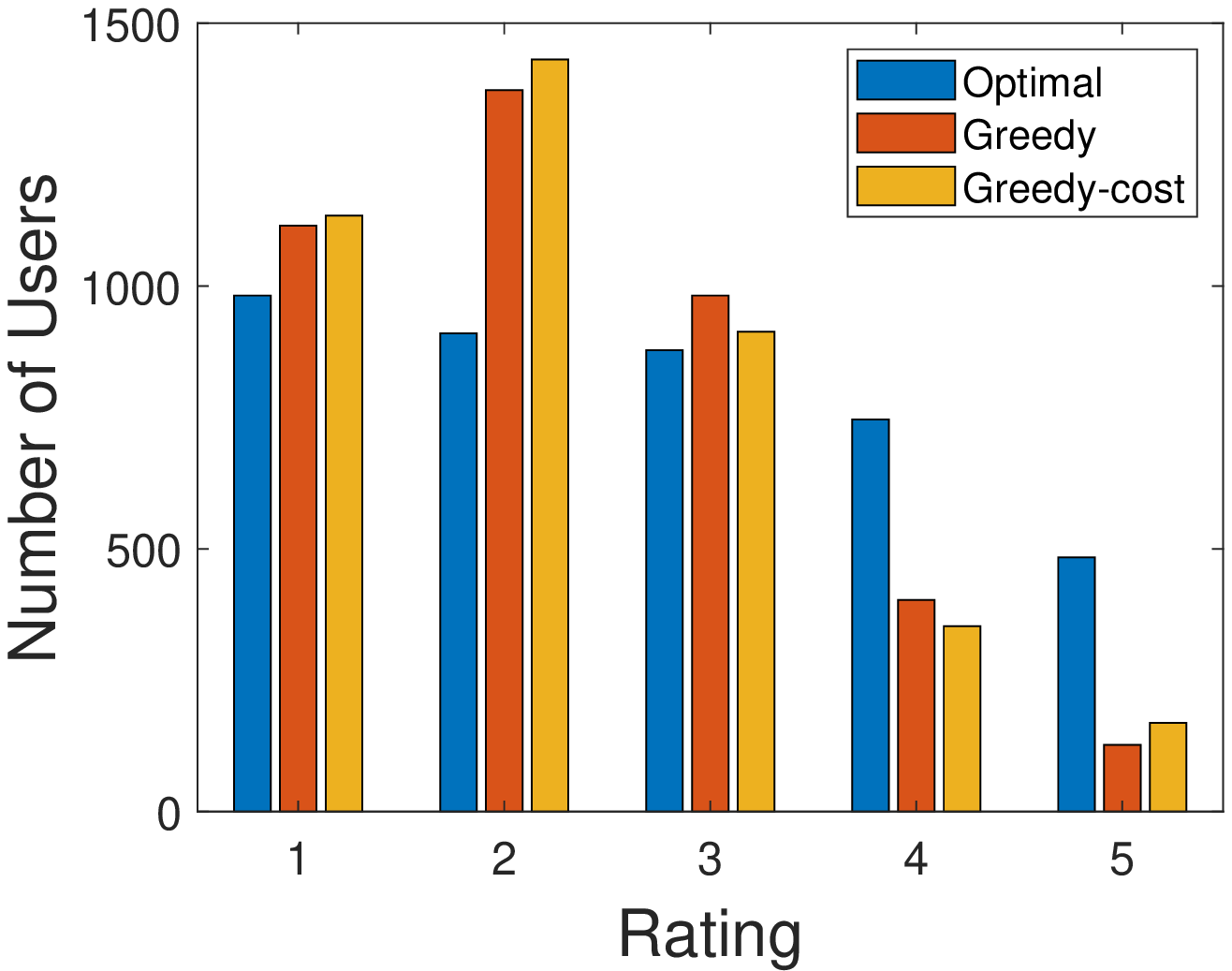}
\end{adjustbox}
\end{subfigure}
\caption{These plots show the histogram of the ratings provided by the users for each of the algorithms. The left figure shows the results for a small number of users, $J=1200$, whereas the right figure shows the results for a large, $J=4000$, number of users.}
\label{fig:J_vs_SL}
\end{figure}

\section{Conclusions and Future Work}\label{Sec:Conclusions}
In this paper, we first extracted relevant keywords from the dataset using the shared knowledge base. Then, using the received keywords and the shared knowledge, we designed an auto-encoder and auto-decoder that only transmit these keywords and, respectively, recover the data. We provided analytical comparisons of the proposed scheme versus the DeepSC~\cite{xie2021deep} and the adaptive~\cite{sana2022learning} schemes in terms of accuracy vs. overhead reduction trade-off and cost comparisons. We computed the accuracy of the reconstructed sentences at the receiver quantitatively. We demonstrated through simulations that the proposed methods outperform a state-of-the-art method in terms of the average number of words per sentence. The designed SemCom system is then applied to a realistic scenario in which a cloud server and a data center serve as transmitter and receiver, respectively. We formulated a data allocation problem (DAP), in which the data center optimally allocates various categories of datasets received from the cloud server to its subscribers. We proved that the DAP belongs to a class of NP-complete problems and proposed a greedy algorithm for solving it. Furthermore, we have numerically demonstrated that the solutions of the proposed greedy algorithm, in terms of profits, are $90\%$ of the optimal solutions. 

{\color{black}In this paper, we focused solely on the text dataset; however, similar SemCom system design approaches can be proposed in the future for other types of datasets such as images, audio, and video. Also, a real-time DAP can be formulated by considering the dynamic storage facility using cache memories at the data center in place of a static storage facility. Another direction for future research is to address the problem of the dynamic arrival and departure of subscribers for a service at the data center and analyze how the data center handles it. Finally, another open problem is to design an approximation algorithm with a provable approximation ratio for the DAP.}

\appendices

\section{Proof of Theorem~\ref{Thm_DAP}} \label{Apdx_Thm_DAP}
The decision version of the DAP is as follows: ``Given a number $L$, does there exist a binary matrix $V$, of size $G \times J$, that satisfy the constraints~\eqref{eq:size_constraint}-\eqref{eq:user_constraint} such that $ \sum_{i=1}^G \!\left(\!c_i\sum_{j=1}^J \! V_{i,j} \!- \!d(z_i) \right)$ $ \ge L$''? Given $V$, we can check in polynomial time whether it satisfies~\eqref{eq:size_constraint}-\eqref{eq:user_constraint} and whether
$\sum_{i=1}^G \left(c_i\sum_{j=1}^J  V_{i,j} - d(z_i) \right) \ge L$. Thus the DAP is in class NP~\cite{kleinberg2006algorithm}. By using~\eqref{eq:mi_Vij}, we simplify the expressions~\eqref{eq:max_profit} and~\eqref{eq:size_constraint} as following:
\begin{subequations}
    \begin{align}
        \max_{m_i, i \in \{1,\ldots, G\}} 
        & \sum_{i=1}^G c_i m_i - d(z_i) \label{eq:cost_max}\\
        & \sum_{i=1}^G z_i m_i \le Z. \label{eq:weight_constraint2}
    \end{align}
\end{subequations}
We now show that the DAP is NP-complete by reducing the knapsack problem (KP), which has been shown to be NP-complete~\cite{kleinberg2006algorithm}, to it. 

Let us consider the following KP: We want to pack $n$ different types of items in a knapsack which can withstand a maximum weight of $W$. Each item of type $i \in \{1,\ldots, n\}$ is categorised with two parameters: a weight $w_i$ and a value $v_i$. The goal of the KP is to find a set of items that produce the maximum possible value, with the restriction that the total weight of the set should not exceed $W$. It is also written as follows:
\begin{subequations}
    \begin{align}
        \max_{x_i, i \in \{1,\ldots, n\}} 
        & \sum_{i=1}^n v_i x_i \label{eq:value_max}\\
        & \sum_{i=1}^n w_i x_i \le W, \label{eq:weight_constraint}\\
        & x_i \in \{0, 1, \ldots, \}, 
    \end{align}
\end{subequations}
where $x_i, i \in \{1,\ldots, n\},$ denote the number of items of type $i$. The decision version of the KP is as follows: ``Given a number $L$, is it possible to achieve $\sum_{i=1}^n v_i x_i \ge L$  without exceeding the total weight constraint $W$''? Let us denote the decision version inequalities of both the problems as following:
\begin{subequations}
    \begin{align}
       \mathrm{D}_1 :& ~\sum_{i=1}^G c_i m_i - d(z_i) \ge L, \\
       \mathrm{D}_2 :& ~\sum_{i=1}^n v_i x_i \ge L.
    \end{align}
\end{subequations}

Now, we show that the KP is polynomial-time reducible to DAP, i.e., KP $<_p$ DAP. Consider the instance of the KP stated in the previous paragraph. From this instance, we construct the following instance of the DAP. In this instance, for every $i \in \{1,\ldots, G\}$, the selling costs are equivalent to the values, i.e., $c_i = v_i$; data sizes are equivalent to the weights, i.e., $z_i = w_i$; number of users with data category $i$ is equivalent to the number of items of type $i$, i.e., $m_i = x_i$; and the constants $Z = W$ and $G = n$. In this instance, we assume that every user is allocated only one category of data that satisfies  the constraint~\eqref{eq:user_constraint}. Also, we assume that the budget of every user is higher than the maximum possible selling cost of all categories of data, i.e., $b_j \ge c_G, \forall j \in \{1,\ldots, J\}$. This assumption leads to $U_{i,j} = 1, \forall i \in \{1,\ldots, G\}, j \in \{1,\ldots, J\}$ (see~\eqref{eq:u_ij}). Now, we see that the constraint~\eqref{eq:category_constraint}, i.e., $\sum_{i=1}^G \sum_{j=1}^J V_{i,j} = \sum_{j=1}^J \sum_{i=1}^G V_{i,j} = \sum_{j=1}^J 1 = J$ is automatically satisfied due to~\eqref{eq:user_constraint}, i.e., $\sum_{i=1}^G V_{i,j} = 1, \forall j \in \{1,\ldots, J\}$.

Given this instance, we ask: does there exist a binary matrix $V$ that satisfy the constraints~\eqref{eq:size_constraint}-\eqref{eq:user_constraint} and  $\mathrm{D}_1$? We claim that
the answer is yes if and only if the answer to the question in the KP instance is yes. The necessity part is proved as follows. If the answer to the above question is yes then there exists a binary matrix $V$ that satisfy the constraints~\eqref{eq:size_constraint}-\eqref{eq:user_constraint}, \eqref{eq:mi_Vij}, and  $\mathrm{D}_1$. From the assumptions we made in the previous paragraph, i.e., for every $i \in \{1,\ldots, G\}$, $c_i = v_i$, $z_i = w_i$, $m_i = x_i$, $Z = W$ and $G = n$, we see that the constraint~\eqref{eq:weight_constraint} and $\mathrm{D}_2$ are satisfied since~\eqref{eq:weight_constraint2} and  $\mathrm{D}_1$ are satisfied, respectively. This proves necessity.

To prove sufficiency, suppose the answer to the question in the KP is yes. Now, let us assume that for every $i \in \{1,\ldots, n\}$, $v_i = c_i$, $w_i = z_i$, $x_i = m_i = \sum_{j=1}^J  V_{i,j}$, $W = Z$, and $n = G$. Now we see that the constraint~\eqref{eq:size_constraint} and $\mathrm{D}_1$ are satisfied since~\eqref{eq:weight_constraint} and  $\mathrm{D}_2$ are satisfied, respectively. The constraints~\eqref{eq:category_constraint}  and~\eqref{eq:user_constraint} are satisfied due to the construction of the instance in the DAP. This proves sufficiency and the result follows.

\bibliographystyle{ieeetr}
\bibliography{references}
\begin{IEEEbiography}
[{\includegraphics[width=1in,height=1.25in,clip,keepaspectratio]{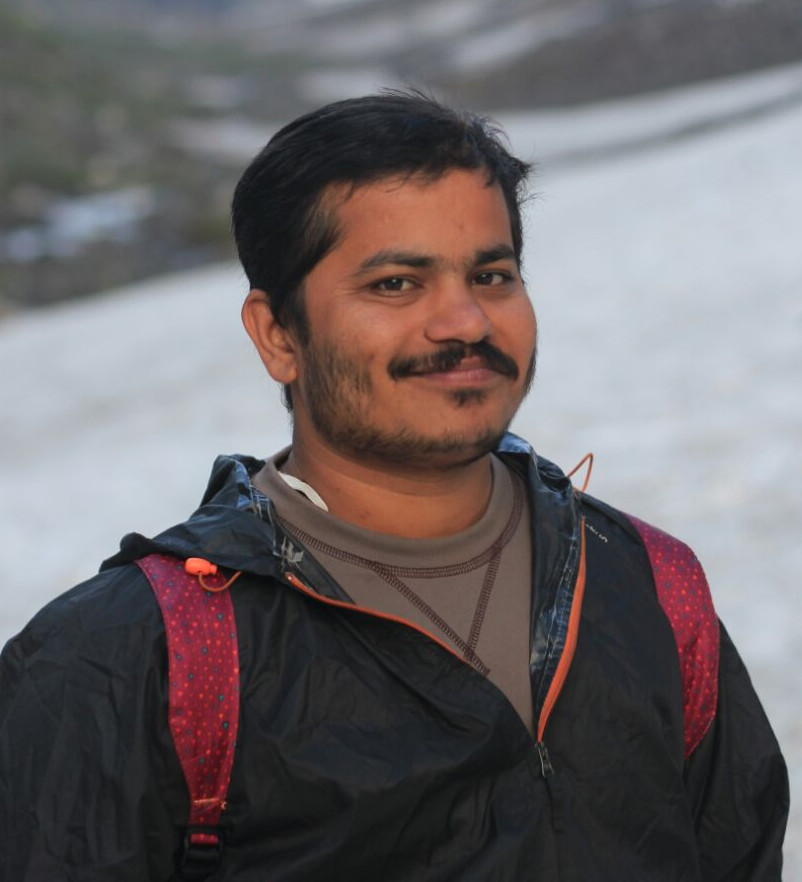}}]
 {Sachin Kadam} received the B.Eng. degree in electronics and communication engineering from the People’s Education Society Institute of Technology, Bengaluru, India, in 2007, the M.Tech. degree in electrical engineering from the Indian Institute of Technology (IIT) Kanpur, India, in 2012, and the Ph.D. degree from IIT Bombay, India, in 2020. He is currently a Postdoctoral Researcher with Prof. Dong In Kim with Sungkyunkwan University, Suwon-si, Gyeonggi-do, Republic of Korea. His research interests include the design and analysis of wireless and M2M networks, semantic communications, differential privacy, and learning. He was a recipient of Scholarship Foundation for Excellence, California, USA, during the B.Eng. degree.
\end{IEEEbiography}

\begin{IEEEbiography}
[{\includegraphics[width=1in,height=1.25in,clip,keepaspectratio]{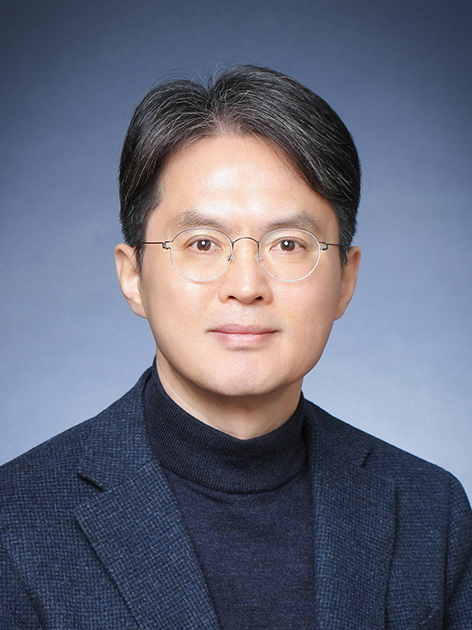}}]
 {Dong In Kim} (Fellow, IEEE) received the Ph.D. degree in electrical engineering from the
University of Southern California, Los Angeles, CA, USA, in 1990. He was a Tenured
Professor with the School of Engineering Science, Simon Fraser University, Burnaby, BC,
Canada. He is currently a Distinguished Professor with the College of Information and
Communication Engineering, Sungkyunkwan University, Suwon, South Korea. He is a
Fellow of the Korean Academy of Science and Technology and a Member of the National
Academy of Engineering of Korea. He was the first recipient of the NRF of Korea
Engineering Research Center in Wireless Communications for RF Energy Harvesting from
2014 to 2021. He received several research awards, including the 2023 IEEE ComSoc Best
Survey Paper Award and the 2022 IEEE Best Land Transportation Paper Award. He was
selected the 2019 recipient of the IEEE ComSoc Joseph LoCicero Award for Exemplary
Service to Publications. He was the General Chair of the IEEE ICC 2022, Seoul. Since 2001,
he has been serving as an Editor, an Editor at Large, and an Area Editor of Wireless
Communications I for IEEE Transactions on Communications. From 2002 to 2011, he served
as an Editor and a Founding Area Editor of Cross-Layer Design and Optimization for IEEE
Transactions on Wireless Communications. From 2008 to 2011, he served as the Co-Editor-
in-Chief for the IEEE/KICS Journal of Communications and Networks. He served as the
Founding Editor-in-Chief for the IEEE Wireless Communications Letters from 2012 to 2015.
He has been listed as a 2020/2022 Highly Cited Researcher by Clarivate Analytics.

 \end{IEEEbiography}
\end{document}